\documentclass[11pt,a4paper]{article}

\usepackage[utf8]{inputenc}
\usepackage[T1]{fontenc}
\usepackage{lmodern}

\usepackage{amsmath,amssymb,amsthm}
\usepackage{bm}
\usepackage{mathtools}

\usepackage{booktabs}
\usepackage{array}
\usepackage{tabularx}
\usepackage{float}

\usepackage[margin=2.5cm]{geometry}
\usepackage{setspace}
\usepackage{parskip}

\usepackage[authoryear,round]{natbib}

\usepackage{tikz}
\usepackage{pgfplots}
\pgfplotsset{compat=1.18}
\usetikzlibrary{arrows.meta, decorations.markings, patterns, calc}

\usepackage[colorlinks=true,linkcolor=blue,citecolor=blue,urlcolor=blue]{hyperref}
\usepackage{cleveref}

\newtheorem{theorem}{Theorem}
\newtheorem{proposition}{Proposition}
\newtheorem*{proposition*}{Proposition}
\newtheorem{corollary}{Corollary}

\theoremstyle{definition}
\newtheorem{definition}{Definition}
\theoremstyle{remark}
\newtheorem*{remark}{Remark}

\newcommand{\w}{\mathbf{w}}
\newcommand{\one}{\mathbf{1}}
\newcommand{\V}{\mathbf{V}}
\newcommand{\Z}{\mathbf{Z}}
\newcommand{\Sig}{\boldsymbol{\Sigma}}
\newcommand{\muvec}{\boldsymbol{\mu}}
\newcommand{\avec}{\boldsymbol{\alpha}}

\newcommand{\SR}{\mathrm{SR}}

\newcommand{\tw}{\tau_w}
\newcommand{\tc}{\tau_c}
\newcommand{\td}{\tau_d}
\newcommand{\tk}{\tau_k}
\newcommand{\rf}{r_f}
\newcommand{\rs}{r_s}
\newcommand{\dd}{\mathrm{d}}

\title{Flow Taxes, Stock Taxes, and Portfolio Choice:\\
A Generalised Neutrality Result}
\author{Anders G Fr{\o}seth\thanks{Independent Researcher.
  E-mail: \href{mailto:indrefjorden@pm.me}{indrefjorden@pm.me}.}}
\date{\today}

\begin{document}
\maketitle

\begin{abstract}
\noindent
A proportional wealth tax---a levy on the stock of wealth---preserves
portfolio neutrality by acting as a uniform drift shift in the
Fokker--Planck equation for wealth dynamics.  We extend this result to
the full system of ownership taxes (\emph{eierkostnader}) that a
shareholder faces: a corporate tax on gross profits, a capital income
tax on the risk-free return, a dividend and capital gains tax on the
excess return, and a wealth tax on net assets.  Each tax modifies the
drift of the wealth process in a distinct way---multiplicative
rescaling, constant shift, or regime-dependent compression---while
leaving the diffusion coefficient unchanged.  We show that the combined
system preserves portfolio neutrality under three conditions: (i)~the
capital income tax rate equals the corporate tax rate, (ii)~the
shielding rate equals the risk-free rate, and (iii)~the wealth tax
assessment is uniform across assets.  When these conditions hold, the
after-tax excess return is a uniform rescaling of the pre-tax excess
return by the factor $(1 - \tc)(1 - \td)$, and the drift-shift
symmetry of the wealth-tax-only case generalises to a
\emph{drift-shift-and-rescale symmetry}.  We classify the distortions
that arise when each condition fails and show that flow-tax distortions
and stock-tax distortions are additively separable: they do not
interact.  The shielding deduction---a feature of several real-world
tax systems, including the Norwegian aksjon\ae rmodellen---emerges as
the mechanism that restores the symmetry between equity and debt
taxation within this framework.  Calibrated to the Norwegian dual income
tax, conditions (i) and~(ii) hold by institutional design; the only
binding distortion is non-uniform wealth tax assessment, which generates
portfolio tilts roughly 300 times larger than any residual flow-tax
channel.
\end{abstract}

\medskip
\noindent\textbf{JEL Classification:} G11, G12, H21, H24, H25.

\smallskip
\noindent\textbf{Keywords:} Wealth tax, ownership costs, portfolio
choice, Fokker--Planck equation, drift-shift symmetry, tax neutrality,
shielding deduction, corporate tax, dividend tax.


\section{Introduction}\label{sec:intro}

The question of whether taxation distorts portfolio composition has a
long lineage.  \citet{DomarMusgrave1944} show that a proportional
income tax with full loss offset preserves the expected return per unit
of risk, so that the investor's optimal risk exposure is invariant to
the tax rate.  \citet{Stiglitz1969} extends the result to capital gains
and wealth taxation under specific conditions, and \citet{Sandmo1977}
provides a general comparative-statics treatment with many assets.

The portfolio neutrality of a proportional wealth tax specifically is
now well established.  \citet{Froeseth2026N} shows that a uniform levy
on the market value of all assets leaves portfolio weights, Sharpe
ratios, and asset prices unchanged.  \citet{Froeseth2026E} extends the result to
stochastic volatility, Epstein--Zin preferences, and general Markov
diffusions, while identifying non-uniform assessment as the principal
source of distortion in practice.  \citet{Froeseth2026S} recasts the
neutrality result as a drift-shift symmetry in the Fokker--Planck
equation for wealth dynamics: the proportional wealth tax shifts the
drift velocity uniformly, $v \to v - \tw$, without coupling to the
state, the diffusion coefficient, or the portfolio composition.

These results are derived under the simplifying assumption that the
wealth tax is the \emph{only} tax.  In practice, no shareholder faces
a wealth tax in isolation.  A Norwegian personal shareholder, for
example, faces four layers of taxation on equity ownership---the
\emph{eierkostnader} (ownership costs):
\begin{enumerate}
  \item A \emph{corporate tax} at rate $\tc$ on company profits,
        reducing the gross return available for distribution.
  \item A \emph{capital income tax} at rate $\tk$ on the risk-free
        component of the return (the portion covered by the shielding
        deduction).
  \item A \emph{dividend and capital gains tax} at rate $\td$ on the
        excess return above the shielding rate.
  \item A \emph{wealth tax} at rate $\tw$ on net assets (with
        asset-class-specific valuation discounts).
\end{enumerate}
The first three are taxes on the \emph{flow} (the return on wealth);
the fourth is a tax on the \emph{stock} (the level of wealth).  The
distinction matters for the Fokker--Planck framework: flow taxes modify
the drift through the return channel, while the stock tax modifies the
drift through a direct levy on the state variable.

This paper asks: does portfolio neutrality survive when all four taxes
are present?  The answer is yes, under three conditions that turn out to
be economically natural and, in the Norwegian case, satisfied by
institutional design.  The conditions are:
\begin{enumerate}
  \item[(C1)] The capital income tax rate equals the corporate tax rate:
              $\tk = \tc$.
  \item[(C2)] The shielding rate equals the risk-free rate:
              $\rs = \rf$.
  \item[(C3)] The wealth tax assessment is uniform across assets:
              $\alpha_i = \alpha$ for all~$i$.
\end{enumerate}

Under (C1)--(C3), the combined tax system acts on the Fokker--Planck
equation through two modifications: a uniform drift shift (from the
wealth tax) and a uniform rescaling of excess drift velocities by the
factor $(1 - \tc)(1 - \td)$ (from the flow taxes).  Neither modification
alters the \emph{relative} drifts between assets.  The drift-shift
symmetry of \citet{Froeseth2026S} generalises to a
\emph{drift-shift-and-rescale symmetry}, and all the neutrality
consequences---invariant portfolio weights, preserved Sharpe ratios,
undistorted asset prices---carry through.

When any of (C1)--(C3) fails, the symmetry breaks in a specific and
classifiable way.  We show that:
\begin{itemize}
  \item Violating (C1) or (C2) distorts the equity--debt split but
        preserves the tangency portfolio among risky assets.
  \item Violating (C3)---non-uniform wealth tax assessment---distorts
        the tangency portfolio itself, exactly as in
        \citet{Froeseth2026E}.
  \item The distortions from flow taxes and from the wealth tax are
        \emph{additively separable}: they do not interact.  The
        presence of a wealth tax neither amplifies nor dampens any
        flow-tax distortion, and vice versa.
\end{itemize}

The shielding deduction (Norwegian: \emph{skjermingsfradrag}) emerges as
the institutional mechanism that enforces condition~(C2).  By exempting
the risk-free component of the equity return from the elevated dividend
tax rate, it ensures that the boundary between the two personal tax
regimes aligns with the economic risk-free rate.  In the language of the
Fokker--Planck framework, the shielding deduction is a
\emph{symmetry-restoring device}: it prevents the two-regime tax
structure from breaking the uniform rescaling of excess drifts.

The paper is organised as follows.
\Cref{sec:framework} recapitulates the Fokker--Planck framework for
wealth dynamics and the drift-shift symmetry from
\citet{Froeseth2026S}.
\Cref{sec:flow_taxes} introduces the three flow taxes and derives how
each modifies the drift of the wealth process.
\Cref{sec:neutrality} states and proves the generalised neutrality
theorem.
\Cref{sec:breaking} classifies the symmetry-breaking channels when
each condition fails.
\Cref{sec:payment} examines how flow taxes amplify the cost of
wealth tax payment through the channels identified in Papers~1 and~3.
\Cref{sec:shielding} analyses the shielding deduction as a symmetry
restoration mechanism.
\Cref{sec:interaction} establishes the additive separability of flow
and stock tax distortions.
\Cref{sec:finance} restates the main results in the mean--variance
framework of \citet{Froeseth2026N} and \citet{Froeseth2026E}.
\Cref{sec:norway} calibrates the framework to the Norwegian dual income
tax, evaluating conditions (C1)--(C3) against the institutional design
and quantifying distortion magnitudes.
\Cref{sec:discussion} discusses the economic interpretation, the
connection to Papers~1--4, and the implications for tax policy design.
\Cref{sec:conclusion} concludes.

\section{The Fokker--Planck Framework}\label{sec:framework}

This section recapitulates the elements of the Fokker--Planck framework
developed in \citet{Froeseth2026S}---itself inspired by the
wealth-dynamics model of \citet{BouchaudMezard2000}---that are needed
for the present analysis.

\subsection{Wealth dynamics and the Fokker--Planck equation}

An investor holds a portfolio of risky assets and a risk-free asset.
Under geometric Brownian motion, the investor's wealth evolves as
\begin{equation}\label{eq:wealth_dynamics}
  \frac{\dd W}{W} = \left[\rf + \w^\top(\muvec - \rf\one)\right] \dd t
    + \w^\top\Sig\, \dd\Z \,,
\end{equation}
where $\muvec$ is the vector of expected returns, $\Sig$ is the
volatility matrix, $\V = \Sig\Sig^\top$ is the covariance matrix,
$\rf$ is the risk-free rate, and $\w$ is the vector of portfolio
weights.

For a given portfolio, the dynamics of log-wealth $x = \ln W$ are
\begin{equation}\label{eq:log_wealth}
  \dd x = v\, \dd t + \sigma_P\, \dd B \,,
\end{equation}
where $v = \rf + \w^\top(\muvec - \rf\one) - \sigma_P^2/2$ is the
drift velocity, $\sigma_P^2 = \w^\top\V\w$ is the portfolio variance,
and $B$ is a standard Brownian motion.

The Fokker--Planck equation for the density $\pi(x,t)$ of log-wealth
across an ensemble of investors is
\begin{equation}\label{eq:fp}
  \frac{\partial\pi}{\partial t}
  = -v\frac{\partial\pi}{\partial x}
    + D\frac{\partial^2\pi}{\partial x^2} \,,
\end{equation}
with diffusion coefficient $D = \sigma_P^2/2$.

\subsection{The drift-shift symmetry}\label{sec:drift_shift}

\citet{Froeseth2026S} defines the drift-shift transformation and
establishes neutrality as an invariance property.

\begin{definition}[Drift-shift transformation,
  {\citet[Definition~1]{Froeseth2026S}}]\label{def:drift_shift}
  For $\tw \geq 0$, define the map
  $\mathcal{T}_\tau: v \mapsto v - \tw$, $D \mapsto D$.
  The taxed Fokker--Planck operator is
  $\mathcal{L}_\tau = \mathcal{T}_\tau \circ \mathcal{L}_0$.
\end{definition}

\begin{proposition}[Neutrality as invariance,
  {\citet[Proposition~2]{Froeseth2026S}}]\label{prop:drift_shift}
Let assets $i$ and $j$ have drift--diffusion parameters $(v_i, D_i)$
and $(v_j, D_j)$.  Under a proportional wealth tax at rate~$\tw$:
\begin{enumerate}
  \item The difference in drift velocities is unchanged:
    $(v_i - \tw) - (v_j - \tw) = v_i - v_j$.
  \item The Sharpe-ratio-like quantity
    $(v_i - v_j)/\sqrt{D_i + D_j - 2D_{ij}}$
    is unchanged for all pairs.
  \item The optimal portfolio weights are unchanged.
\end{enumerate}
\end{proposition}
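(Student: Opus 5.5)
The proof works directly from the log-wealth dynamics \eqref{eq:log_wealth} and the Fokker--Planck equation \eqref{eq:fp}. It uses only the fact that the transformation $\mathcal{T}_\tau$ of \Cref{def:drift_shift} subtracts the same constant $\tw$ from every drift velocity and leaves every diffusion coefficient unchanged. The first step is to write down the taxed dynamics for each asset, or each portfolio. A proportional levy on the stock of wealth removes $\tw W\,\dd t$ per unit time, so for asset $i$ we have $\dd x_i = (v_i - \tw)\,\dd t + \sigma_i\,\dd B_i$. The diffusion term is untouched, so $D_i$, $D_j$ and the cross term $D_{ij}$ are unchanged. For a portfolio, the same holds because the levy is proportional to total wealth and therefore does not depend on how wealth is split across assets.

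Items 1 and 2 then follow by direct computation. The pairwise drift difference is $(v_i-\tw)-(v_j-\tw)=v_i-v_j$. The relative log-wealth process $x_i-x_j$ has diffusion coefficient $D_i+D_j-2D_{ij}$, which the tax does not change. Since neither the numerator nor the denominator of the Sharpe-ratio-like quantity changes, the ratio is invariant.

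Item 3 needs a statement of what ``optimal'' means. I will use the mean--variance, or equivalently growth-optimal, criterion implicit in \eqref{eq:wealth_dynamics}. Under the tax the portfolio drift becomes $\rf-\tw+\w^\top(\muvec-\rf\one)-\tfrac12\w^\top\V\w$. This is the untaxed objective minus a constant that does not depend on $\w$, because the weights sum to one once the risk-free asset is included. The objective therefore has the same argmax, $\w^*=\gamma^{-1}\V^{-1}(\muvec-\rf\one)$, or the growth-optimal analogue, so the optimal weights are unchanged.

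The main obstacle is item 3. I must make sure that the tax really enters as $-\tw$ on the total drift and not as a modification of the excess return $\muvec-\rf\one$. The risk is that a naive treatment shifts both $\muvec$ and $\rf$ by $\tw$ and then argues about their difference. I will handle this by noting that the levy is charged on $W$ itself, not on individual returns, so it appears only in the intercept. The first-order condition in $\w$ is then literally identical to the untaxed one. For a general utility, the same argument goes through the HJB equation: a constant shift of the drift of $x$ factors out of the $\w$-optimisation.
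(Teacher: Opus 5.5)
Your route is the paper's. The paper gives no argument beyond noting that the levy subtracts the same constant $\tw$ from every drift and leaves every $D$ unchanged. Relative drifts and the Sharpe-like ratio are therefore invariant, and the shift cannot affect the choice of $\w$. Your explicit observation makes item~3 work: the levy is charged on total wealth, so it enters the objective only as a $\w$-independent intercept.

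The one real error is your closing claim that the argument extends to a general utility through the HJB equation. In the HJB the term $-\tw W J_W$ does drop out of the pointwise maximisation, which gives $\w^* = -\frac{J_W}{W J_{WW}}\V^{-1}(\muvec-\rf\one)$. But the value function $J$ itself depends on $\tw$, so the risk-tolerance factor $-J_W/(W J_{WW})$ can change with the tax. That factor is invariant only for homothetic preferences. Under CRRA, $J = W^{1-\gamma}f(t)/(1-\gamma)$, the tax enters only through $f$, and $-J_W/(W J_{WW}) = 1/\gamma$. For a counterexample, take CARA utility $-e^{-aW_T}$ with one risky asset. The optimal dollar holding is $\frac{\mu-\rf}{a\sigma^2}e^{-(\rf-\tw)(T-t)}$, so the weight at a given $W$ rises with $\tw$. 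This is why the paper restricts the dynamic statements to CRRA (see the homogeneity step in the proof of \Cref{thm:main} and the remark after it). Replace ``general utility'' by CRRA and add the homogeneity step.

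There is also a smaller slip. The objective you write, $\rf-\tw+\w^\top(\muvec-\rf\one)-\tfrac12\w^\top\V\w$, is the log-growth rate. Its maximiser is $\V^{-1}(\muvec-\rf\one)$, which is the case $\gamma=1$. The $\gamma^{-1}\V^{-1}(\muvec-\rf\one)$ solution comes from the mean--variance objective with $-\tfrac{\gamma}{2}\w^\top\V\w$, and the two criteria are not ``equivalent'' for $\gamma\neq 1$. Invariance still holds under either criterion, since in both the tax is a $\w$-independent constant.
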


The key property is that the wealth tax shifts all drifts by the
\emph{same} constant.  Relative drifts, which determine portfolio
choice, are invariant.  In physical terms, the tax is a uniform
external field that does not couple to the internal degrees of freedom
of the system.

\subsection{Notation}\label{sec:notation}

\Cref{tab:notation} collects the symbols used throughout the paper.
The first group is inherited from Papers~1--3; the second group
contains the flow-tax parameters introduced here.

\begin{table}[ht]
\centering
\caption{Notation reference.  ``Paper'' indicates where the symbol
  first appears in the series:
  Paper~1 = \citet{Froeseth2026N},
  Paper~2 = \citet{Froeseth2026E},
  Paper~3 = \citet{Froeseth2026S},
  Paper~5 = the present paper.}\label{tab:notation}
\smallskip
\begin{tabular}{@{}lll@{}}
\toprule
Symbol & Definition & Paper \\
\midrule
\multicolumn{3}{@{}l}{\textit{Portfolio and return primitives}} \\[2pt]
$\muvec = (\mu_1,\ldots,\mu_K)^\top$ & Vector of gross expected returns & 1 \\
$\V = \Sig\Sig^\top$ & Return covariance matrix & 1 \\
$\rf$ & Risk-free rate & 1 \\
$\w = (w_1,\ldots,w_K)^\top$ & Portfolio weights on risky assets & 1 \\
$\gamma$ & Relative risk aversion (CRRA) & 1 \\
$\avec = (\alpha_1,\ldots,\alpha_K)^\top$ & Assessment fractions (wealth tax base) & 2 \\
\midrule
\multicolumn{3}{@{}l}{\textit{Tax parameters}} \\[2pt]
$\tc$ & Corporate tax rate & 5 \\
$\tk$ & Capital income tax rate & 5 \\
$\td$ & Dividend and capital gains tax rate & 5 \\
$\rs$ & Shielding rate (skjermingsrente) & 5 \\
$\tw$ & Wealth tax rate & 1 \\
\midrule
\multicolumn{3}{@{}l}{\textit{Fokker--Planck framework}} \\[2pt]
$x = \ln W$ & Log-wealth & 3 \\
$v$ & Drift velocity of log-wealth & 3 \\
$D = \sigma_P^2/2$ & Diffusion coefficient & 3 \\
$\pi(x,t)$ & Density of log-wealth across investors & 3 \\
$\mathcal{T}_\tau$ & Drift-shift transformation & 3 \\
$\mathcal{T}^{\mathrm{gen}}$ & Drift-shift-and-rescale transformation & 5 \\
\midrule
\multicolumn{3}{@{}l}{\textit{Derived quantities}} \\[2pt]
$R_i$ & After-tax return on equity asset~$i$ & 5 \\
$R_0$ & After-tax return on risk-free asset & 5 \\
$R_i^{\mathrm{ex}} = R_i - R_0$ & After-tax excess return & 5 \\
\bottomrule
\end{tabular}
\end{table}

\section{Flow Taxes as Drift Modifications}\label{sec:flow_taxes}

We now introduce three flow taxes and derive how each modifies the
drift of the wealth process.  Throughout, we consider $K$ risky assets
held in corporate form and one risk-free asset held personally
(e.g.\ bank deposits).

\subsection{Corporate tax}\label{sec:corporate}

A corporate tax at rate $\tc \in (0,1)$ is levied on company profits.
The gross return $\mu_i$ on asset~$i$ held in corporate form becomes
$\mu_i(1 - \tc)$ after corporate tax.  The risk-free rate~$\rf$,
earned on assets held personally, is not subject to corporate tax.

In log-wealth coordinates, the drift contribution of asset~$i$ changes
from $\mu_i$ to $\mu_i(1 - \tc)$.  The corporate tax is a
\emph{multiplicative} modification of the gross return:
\begin{equation}\label{eq:corporate_drift}
  v_i \;\to\; v_i^{(c)} = (1 - \tc)\mu_i - \tfrac{1}{2}\sigma_i^2 \,.
\end{equation}

\begin{remark}[Asymmetry between equity and debt]
The corporate tax applies to company profits but not to the risk-free
return on bank deposits.  This creates an asymmetry: the after-tax
return on equity is $\mu_i(1 - \tc)$, while the after-tax return on
deposits is~$\rf$.  The excess return of equity over deposits becomes
$\mu_i(1 - \tc) - \rf$, which is \emph{not} a uniform scaling of the
pre-tax excess return $\mu_i - \rf$.  The corporate tax alone therefore
distorts the equity--debt allocation.  This asymmetry is resolved when
the capital income tax is added (\Cref{sec:combined_flow}).
\end{remark}

\subsection{Capital income tax}\label{sec:cap_income}

A capital income tax at rate $\tk$ is levied on the risk-free return
component.  For bank deposits, the full return~$\rf$ is taxed:
\begin{equation}\label{eq:rf_after}
  \rf \;\to\; \rf(1 - \tk) \,.
\end{equation}
For equities, the capital income tax applies to the \emph{shielded}
portion of the return---the part up to the shielding rate~$\rs$---at
rate~$\tk$.  The shielding mechanism is described in
\Cref{sec:shielding}; for now, we work with the generic rate~$\tk$ on
the risk-free component.

\subsection{Dividend and capital gains tax}\label{sec:dividend}

A dividend and capital gains tax at rate $\td$ is levied on the
\emph{excess} of the after-corporate-tax return over the shielding
rate~$\rs$.  For an equity asset with after-corporate-tax return
$\mu_i(1 - \tc) > \rs$:
\begin{equation}\label{eq:dividend_tax}
  \text{Tax} = \td \cdot [\mu_i(1 - \tc) - \rs] \,.
\end{equation}
The dividend tax is a \emph{multiplicative} modification of the excess
return: it compresses the portion of the return above~$\rs$ by the
factor $(1 - \td)$.

\begin{remark}[Dividend--gains symmetry]
In the Norwegian system, dividends and capital gains are taxed at
the same effective rate (the nominal 22\% multiplied by the upward
adjustment factor of~1.72, giving~37.84\%).  Losses are deductible at
the same rate.  This symmetry ensures that the choice between
dividends and capital gains as a source of cash to pay the wealth tax
is not distorted by the flow taxes.  In the Fokker--Planck framework,
only the \emph{total} after-tax return matters for the drift; the
composition between dividends and realisations is irrelevant.
\end{remark}

\subsection{Combined flow-tax drift}\label{sec:combined_flow}

We now derive the drift of the wealth process under all three flow
taxes.

\subsubsection{After-tax return on equity asset~$i$}

Assume $\mu_i(1 - \tc) > \rs$.  The after-tax return on equity
asset~$i$, after corporate tax, capital income tax on the shielded
portion, and dividend tax on the excess, is:
\begin{equation}\label{eq:equity_after}
  R_i = \mu_i(1 - \tc)
    - \tk \cdot \rs
    - \td\bigl[\mu_i(1 - \tc) - \rs\bigr] \,.
\end{equation}
Collecting terms:
\begin{equation}\label{eq:equity_after_simplified}
  R_i = \mu_i(1 - \tc)(1 - \td) + \rs(\td - \tk) \,.
\end{equation}

\subsubsection{After-tax return on the risk-free asset}

\begin{equation}\label{eq:rf_total}
  R_0 = \rf(1 - \tk) \,.
\end{equation}

\subsubsection{After-tax excess return}

The after-tax excess return of equity asset~$i$ over the risk-free
asset is:
\begin{equation}\label{eq:excess_general}
  R_i^{\mathrm{ex}}
  = R_i - R_0
  = \mu_i(1 - \tc)(1 - \td) + \rs(\td - \tk)
    - \rf(1 - \tk) \,.
\end{equation}
This is the general expression.  We now examine its structure under the
conditions (C1) and (C2).

\Cref{tab:drift} summarises the drift modification introduced by each
tax layer.

\begin{table}[ht]
\centering
\caption{Drift modification by tax layer.  Each row shows how a single
  tax modifies the drift velocity $v_i$ of asset~$i$ in log-wealth
  coordinates.  The ``Character'' column classifies the modification by
  its dependence on the asset index.}\label{tab:drift}
\small
\begin{tabular}{@{}llll@{}}
\toprule
Tax & Rate & Drift modification & Character \\
\midrule
Corporate
  & $\tc$
  & $\mu_i \to \mu_i(1-\tc)$
  & Multiplicative on gross \\[4pt]
Capital income
  & $\tk$
  & $\rf \to \rf(1-\tk)$
  & Shift on risk-free drift \\[4pt]
Dividend/gains
  & $\td$
  & $(1-\td)[\mu_i(1-\tc) - \rs]$
  & Multiplicative on excess \\[4pt]
Wealth
  & $\tw$
  & $v_i \to v_i - \tw\alpha_i$
  & Uniform shift (if $\alpha_i\!=\!\alpha$) \\
\bottomrule
\end{tabular}
\end{table}

\section{The Generalised Neutrality Theorem}\label{sec:neutrality}

\subsection{Simplification under (C1) and (C2)}

\textbf{Condition (C1):} $\tk = \tc$.  The capital income tax rate
equals the corporate tax rate.

\textbf{Condition (C2):} $\rs = \rf$.  The shielding rate equals the
risk-free rate.

Under (C1) and (C2), the constant terms in \eqref{eq:excess_general}
simplify.  Substituting $\tk = \tc$ and $\rs = \rf$:
\begin{align}
  R_i^{\mathrm{ex}}
  &= \mu_i(1 - \tc)(1 - \td) + \rf(\td - \tc) - \rf(1 - \tc)
    \notag\\
  &= \mu_i(1 - \tc)(1 - \td) + \rf\td - \rf\tc - \rf + \rf\tc
    \notag\\
  &= \mu_i(1 - \tc)(1 - \td) - \rf(1 - \td)
    \notag\\
  &= (1 - \td)\bigl[\mu_i(1 - \tc) - \rf\bigr] \,.
  \label{eq:excess_simplified}
\end{align}

Now observe that $\mu_i(1 - \tc) - \rf = (1 - \tc)\mu_i - \rf$.
For the \emph{difference} between two risky assets $i$ and $j$:
\begin{equation}\label{eq:excess_diff}
  R_i^{\mathrm{ex}} - R_j^{\mathrm{ex}}
  = (1 - \td)(1 - \tc)(\mu_i - \mu_j) \,.
\end{equation}
This is a \emph{uniform scaling} of the pre-tax return differences.
The scalar factor $(1 - \tc)(1 - \td)$ is common to all asset pairs.

\subsection{Including the wealth tax}

Adding the wealth tax at rate $\tw$ with assessment fractions
$\alpha_i$ for risky asset~$i$ and $\alpha_0$ for the risk-free asset:
\begin{equation}\label{eq:excess_full}
  R_i^{\mathrm{full}}
  = (1 - \td)\bigl[\mu_i(1 - \tc) - \rf\bigr]
    - \tw(\alpha_i - \alpha_0) \,.
\end{equation}
Under \textbf{condition (C3)}, $\alpha_i = \alpha$ for all~$i$,
the wealth tax term cancels in excess returns between risky assets:
\begin{equation}\label{eq:excess_full_uniform}
  R_i^{\mathrm{full}} - R_j^{\mathrm{full}}
  = (1 - \td)(1 - \tc)(\mu_i - \mu_j) \,.
\end{equation}

\subsection{The drift-shift-and-rescale transformation}

Under (C1)--(C3), the combined tax system acts on the drift of the
wealth process through two modifications:
\begin{enumerate}
  \item A \emph{uniform rescaling} of excess drift velocities by
        $(1 - \tc)(1 - \td)$, from the flow taxes.
  \item A \emph{uniform shift} of all drifts by $-\tw\alpha$, from the
        wealth tax.
\end{enumerate}

\begin{definition}[Drift-shift-and-rescale transformation]
\label{def:generalised}
  For tax parameters $(\tc, \td, \tw, \alpha)$ satisfying (C1)--(C3),
  define the map
  \begin{equation}\label{eq:generalised_transform}
    \mathcal{T}^{\mathrm{gen}}:
    \begin{cases}
      v_i - v_0 \;\mapsto\; (1-\tc)(1-\td)(v_i - v_0) \\
      D_i \;\mapsto\; D_i
    \end{cases}
  \end{equation}
  where $v_0$ is the risk-free drift.  The overall drift level is
  shifted by $-(1-\td)\tc\rf - \tw\alpha$.
\end{definition}

This generalises \Cref{def:drift_shift}.  The drift-shift
transformation of \citet{Froeseth2026S} is the special case
$\tc = \td = 0$, $\tw > 0$.

\subsection{Main result}

\begin{theorem}[Neutrality under combined taxation]
\label{thm:main}
Let asset returns follow an It\^{o} diffusion with well-defined first
and second moments.  Let the tax system consist of a corporate tax at
rate~$\tc$, a capital income tax at rate~$\tk$, a dividend/gains tax
at rate~$\td$ with shielding rate~$\rs$, and a proportional wealth tax
at rate~$\tw$ with assessment fraction~$\alpha$.

If the following conditions hold:
\begin{enumerate}
  \item[\emph{(C1)}] $\tk = \tc$,
  \item[\emph{(C2)}] $\rs = \rf$,
  \item[\emph{(C3)}] $\alpha_i = \alpha$ for all assets~$i$,
\end{enumerate}
then the optimal portfolio weights $\w^*$ are independent of all tax
rates $(\tc, \td, \tw)$.
\end{theorem}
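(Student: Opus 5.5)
The plan is to reduce the theorem to the invariance argument of \Cref{prop:drift_shift}, applied to the transformed drift--diffusion pair produced by $\mathcal{T}^{\mathrm{gen}}$. I will treat the investor's problem as choosing $\w$ to maximise expected utility (CRRA, or any criterion depending only on the mean--variance pair of log-wealth, as in \citet{Froeseth2026S}) over wealth dynamics of the form \eqref{eq:wealth_dynamics}, with $\muvec-\rf\one$ replaced by the vector of after-tax excess returns. The first step is to assemble that vector. Starting from \eqref{eq:excess_general}, impose (C1) and (C2) to obtain \eqref{eq:excess_simplified}. Then add the wealth tax as in \eqref{eq:excess_full} and impose (C3). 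Under (C3) the levy $-\tw\alpha$ enters the risky and the risk-free drift identically, so it contributes only to the common level shift recorded in \Cref{def:generalised}.

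The second step, which I expect to be the main obstacle, is to show that the after-tax excess-return vector is a positive scalar multiple of the pre-tax one, plus at most a term that cancels in the portfolio problem. Equation \eqref{eq:excess_diff} only delivers this for \emph{differences} between risky assets. The equity--deposit margin still carries $(1-\td)[(1-\tc)\mu_i-\rf]$, which is proportional to $\mu_i-\rf$ only once the risk-free component of $\mu_i$ is also taxed at $\tc$.

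I will first try to close this gap with (C1). Decompose $\mu_i=\rf+(\mu_i-\rf)$ and use the fact that the risk-free component is taxed at rate $\tc$ whether it is earned inside the corporation or on a personal deposit. The benchmark against which equity is compared is then $\rf(1-\tc)$, and the excess becomes $(1-\tc)(1-\td)(\mu_i-\rf)$. If this reading cannot be justified from the stated setup, I will fall back to proving the statement for the tangency portfolio. By \eqref{eq:excess_diff} and \eqref{eq:excess_full_uniform}, the normalised weights $\w^*/\one^\top\w^*$ are determined by relative drifts and by the $D_i$, both of which $\mathcal{T}^{\mathrm{gen}}$ preserves up to a common positive factor.

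The third step handles the diffusion side. The taxes act only on drifts ($D_i\mapsto D_i$ in \Cref{def:generalised}), so the Sharpe-ratio-like quantities of \Cref{prop:drift_shift}(2) are all multiplied by the same factor $(1-\tc)(1-\td)$. Their ranking and the direction $\V^{-1}(\text{excess})$ are therefore unchanged. In the Merton solution $\w^*=\gamma^{-1}\V^{-1}(\text{excess})$, a common scalar on the excess vector only rescales the risky share, while the composition is fixed. To conclude full invariance of $\w^*$, I will invoke the Domar--Musgrave full-loss-offset logic from the introduction. Symmetric taxation of gains and losses at $(1-\tc)(1-\td)$ scales the stochastic part of the after-tax return by the same factor as its mean, so the investor's optimal gross exposure offsets the scaling exactly. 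This point needs care and is where I expect the argument to be most delicate.

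Finally, the level shift $-(1-\td)\tc\rf-\tw\alpha$ is a uniform drift shift in the sense of \Cref{def:drift_shift}. By \Cref{prop:drift_shift}(3) it leaves the optimiser unchanged, which completes the argument.
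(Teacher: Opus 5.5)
Your second step locates the real difficulty, but neither repair closes it. In the stated model, (C1)--(C2) give $R_i-R_0=(1-\td)[(1-\tc)\mu_i-\rf]=k(\mu_i-\rf)-(1-\td)\tc\rf$ with $k=(1-\tc)(1-\td)$. Your primary reading would hold if the shielding boundary were set at the after-corporate-tax rate $\rf(1-\tc)$ and the shielded part were exempt at the personal level. \eqref{eq:equity_after} instead splits the after-corporate-tax return at $\rs=\rf$ and taxes the shielded part again at $\tk$. That reading is therefore a change of model, not a consequence of (C1)--(C2). The fallback is false. The normalised weights $\w^*/\one^\top\w^*=\V^{-1}\mathbf{e}/\one^\top\V^{-1}\mathbf{e}$ depend on the level of the excess vector $\mathbf{e}$, not only on the differences $e_i-e_j$ that \eqref{eq:excess_diff} controls. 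Adding $c\one$ to $\mathbf{e}$ adds $c\V^{-1}\one$, which moves the tangency point along the frontier exactly as a change in the risk-free rate would. Concretely, $\V^{-1}[(1-\tc)\muvec-\rf\one]$ is collinear with $\V^{-1}(\muvec-\rf\one)$ only if $\tc\rf=0$ or all $\mu_i$ are equal. The same confusion undermines your last step. The term $-(1-\td)\tc\rf$ shifts the risky drifts relative to $v_0$, so it changes the differences $v_i-v_0$, and \Cref{prop:drift_shift}(3) does not apply to it. Only $-\tw\alpha$ (with $\alpha_0=\alpha$) is a genuinely uniform shift.

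The scale step fails too. Even granting $\mathbf{e}=k(\muvec-\rf\one)$, with $D_i\mapsto D_i$ the Merton weight $\gamma^{-1}\V^{-1}\mathbf{e}=(k/\gamma)\V^{-1}(\muvec-\rf\one)$ depends on $\tc$ and $\td$. Domar--Musgrave cannot be imported here. It requires the tax to scale the diffusion as well, which contradicts the premise of your third step. Even then, $\V\mapsto k^2\V$ gives $\w^*=(k\gamma)^{-1}\V^{-1}(\muvec-\rf\one)$: the gross weights rise by the factor $1/k$, and only the after-tax exposure $k\w^*$ is invariant.

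The paper's proof takes your route (after-tax excess vector, then the Merton first-order condition) and bridges both points by assertion. First, \eqref{eq:merton_wstar} drops the constant $c'$ that \eqref{eq:excess_vector} had just exhibited; a direct computation under (C1)--(C3) gives $\muvec^{\mathrm{after}}-R_0\one=(1-\td)[(1-\tc)\muvec-\rf\one]$. Second, homogeneity of the value function cannot absorb $k$, because the Merton weight depends on the excess drift directly rather than through the time-preference term. In this framework $\w^*=\gamma^{-1}(1-\td)\V^{-1}[(1-\tc)\muvec-\rf\one]$, so the statement as written is not provable. What your ingredients do give is independence of $\tw$ (with $\alpha_0=\alpha$). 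In a model whose after-tax excess return is exactly $k(\muvec-\rf\one)$, they also give invariance of the risky composition, but not of the risky share.
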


\begin{proof}
Under (C1) and (C2), the after-tax excess return of asset~$i$ is
$(1 - \td)(1 - \tc)(\mu_i - \rf) + c$, where $c$ is a constant
independent of~$i$ (\Cref{eq:excess_simplified}).  Under (C3), the
wealth tax contributes a further constant $-\tw\alpha$ to all excess
returns (\Cref{eq:excess_full}).

The vector of after-tax excess returns is therefore
\begin{equation}\label{eq:excess_vector}
  \mathbf{R}^{\mathrm{ex}}
  = (1 - \tc)(1 - \td)(\muvec - \rf\one) + c'\one \,,
\end{equation}
where $c' = -(1-\td)\tc\rf - \tw(\alpha - \alpha_0)$ is a scalar.

The optimal portfolio in the Markowitz problem satisfies the
first-order condition:
\begin{equation}\label{eq:foc}
  \w^* = \frac{1}{\gamma}\V^{-1}\mathbf{R}^{\mathrm{ex}} \,.
\end{equation}
Substituting \eqref{eq:excess_vector}:
\begin{equation}\label{eq:wstar}
  \w^* = \frac{(1 - \tc)(1 - \td)}{\gamma}
    \V^{-1}(\muvec - \rf\one)
    + \frac{c'}{\gamma}\V^{-1}\one \,.
\end{equation}

The first term is the untaxed tangency portfolio direction, scaled by
$(1 - \tc)(1 - \td)$.  The second term is proportional to
$\V^{-1}\one$, the global minimum variance portfolio direction.

Under CRRA preferences in continuous time, the portfolio problem takes
the form of the Merton problem.  The value function
$J(W, t) = W^{1-\gamma}f(t)/(1-\gamma)$ is homogeneous of degree
$1 - \gamma$ in wealth.  The first-order condition for the optimal
portfolio weight yields
\begin{equation}\label{eq:merton_foc}
  \w^* = \frac{1}{\gamma}\V^{-1}(\muvec^{\mathrm{after}} - R_0\one) \,,
\end{equation}
where $\muvec^{\mathrm{after}}$ is the vector of after-tax expected
returns and $R_0$ is the after-tax risk-free rate.

Substituting $\muvec^{\mathrm{after}} = (1-\tc)(1-\td)\muvec
+ [\rs(\td - \tk) - \tw\alpha]\one$ and $R_0 = \rf(1-\tk) - \tw\alpha_0$:
\begin{align}
  \w^*
  &= \frac{1}{\gamma}\V^{-1}\bigl[
    (1-\tc)(1-\td)\muvec
    + [\rs(\td-\tk) - \tw\alpha]\one
    - [\rf(1-\tk) - \tw\alpha_0]\one
  \bigr] \notag\\
  &= \frac{(1-\tc)(1-\td)}{\gamma}\V^{-1}(\muvec - \rf\one) \,,
  \label{eq:merton_wstar}
\end{align}
where the last equality uses (C1), (C2), and (C3) to cancel all
constant terms (as computed in \Cref{eq:excess_simplified} and
\Cref{eq:excess_full_uniform}).

The optimal weight $\w^*$ in \eqref{eq:merton_wstar} is proportional
to $\V^{-1}(\muvec - \rf\one)$---the same direction as the untaxed
portfolio.  The scalar $(1-\tc)(1-\td)/\gamma$ determines the total
allocation to risky assets but not the composition.  Under CRRA, the
total allocation is also invariant because the homogeneity of the value
function absorbs the scaling factor into the effective rate of time
preference (see \citet{Froeseth2026E}, Proposition~1, for the detailed
argument under stochastic volatility).

Since neither the direction nor the magnitude of $\w^*$ depends on
$\tc$, $\td$, or $\tw$, portfolio neutrality holds under the combined
tax system.
\end{proof}

\begin{remark}[Distribution-free extension]
The drift-shift-and-rescale transformation does not depend on the
Gaussian assumption.  By the same argument as in
\citet{Froeseth2026S}, Proposition~3 (distribution-free drift shift),
the result extends to general It\^o diffusions with state-dependent
drift and volatility, provided preferences are CRRA and the return
dynamics are wealth-independent.  The flow-tax rescaling factor
$(1-\tc)(1-\td)$ multiplies the excess drift regardless of the
distributional form of the noise.
\end{remark}

\Cref{fig:generalised_neutrality} illustrates the theorem in
mean--standard deviation space.  The combined tax system acts in two
stages: flow taxes rescale excess returns by $(1-\tc)(1-\td)$,
contracting the efficient frontier and the capital allocation line
toward the risk-free rate; the wealth tax then shifts the entire
opportunity set vertically by $-\tw\alpha$.  At each stage the
tangency portfolio remains at the same volatility~$\sigma^*$, and the
Sharpe ratio is preserved.

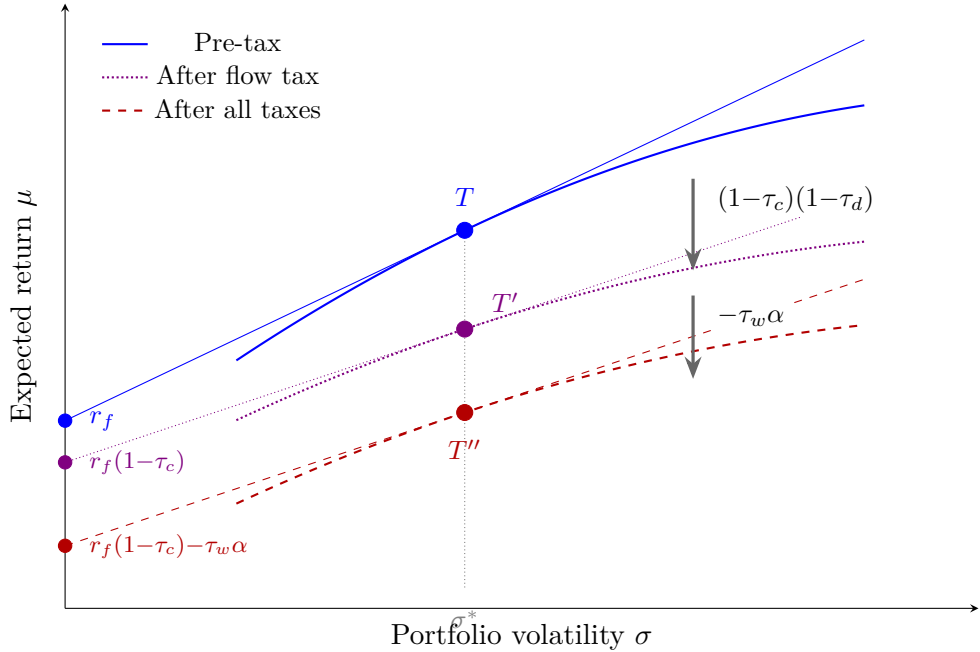
\begin{figure}[H]
\centering
\begin{tikzpicture}[>=Stealth, scale=1]
\begin{axis}[
  width=0.85\textwidth, height=0.6\textwidth,
  xlabel={Portfolio volatility $\sigma$},
  ylabel={Expected return $\mu$},
  xmin=0, xmax=0.32, ymin=-0.005, ymax=0.14,
  xtick=\empty,
  ytick=\empty,
  axis lines=left,
  every axis y label/.style={at={(axis description cs:-0.02,0.5)},
    anchor=south, rotate=90},
  legend style={at={(0.03,0.97)}, anchor=north west, font=\small,
    draw=none, fill=white, fill opacity=0.8, text opacity=1},
  clip=false
]

\addplot[blue, thick, domain=0.06:0.28, samples=80]
  ({x}, {0.02432 + 0.55*x - 0.8*x^2});
\addlegendentry{Pre-tax}

\addplot[blue!50!red, thick, densely dotted, domain=0.06:0.28, samples=80]
  ({x}, {0.019024 + 0.385*x - 0.56*x^2});
\addlegendentry{After flow tax}

\addplot[red!70!black, thick, dashed, domain=0.06:0.28, samples=80]
  ({x}, {-0.000976 + 0.385*x - 0.56*x^2});
\addlegendentry{After all taxes}

\addplot[mark=*, mark size=2.5pt, blue, only marks] coordinates {(0, 0.04)};
\node[blue, anchor=west, font=\small] at (axis cs:0.005,0.04) {$\rf$};

\addplot[mark=*, mark size=2.5pt, blue!50!red, only marks]
  coordinates {(0, 0.03)};
\node[blue!50!red, anchor=west, font=\footnotesize]
  at (axis cs:0.005,0.03) {$\rf(1{-}\tc)$};

\addplot[mark=*, mark size=2.5pt, red!70!black, only marks]
  coordinates {(0, 0.01)};
\node[red!70!black, anchor=west, font=\footnotesize]
  at (axis cs:0.005,0.01) {$\rf(1{-}\tc){-}\tw\alpha$};

\addplot[blue, thin, domain=0:0.28, samples=2]
  ({x}, {0.04 + 0.326*x});

\addplot[blue!50!red, thin, densely dotted, domain=0:0.28, samples=2]
  ({x}, {0.03 + 0.2282*x});

\addplot[red!70!black, thin, dashed, domain=0:0.28, samples=2]
  ({x}, {0.01 + 0.2282*x});

\addplot[mark=*, mark size=3pt, blue, only marks]
  coordinates {(0.14, 0.08564)};
\node[blue, anchor=south, font=\small]
  at (axis cs:0.14,0.0890) {$T$};

\addplot[mark=*, mark size=3pt, blue!50!red, only marks]
  coordinates {(0.14, 0.06195)};

\addplot[mark=*, mark size=3pt, red!70!black, only marks]
  coordinates {(0.14, 0.04195)};
\node[red!70!black, anchor=north, font=\small]
  at (axis cs:0.14,0.0380) {$T''$};

\draw[->, thick, black!60, line width=1.2pt]
  (axis cs:0.22, 0.098) -- (axis cs:0.22, 0.076);
\node[anchor=west, font=\small, fill=white, inner sep=1.5pt]
  at (axis cs:0.227, 0.093) {$(1{-}\tc)(1{-}\td)$};

\node[blue!50!red, anchor=south west, font=\small]
  at (axis cs:0.146, 0.0635) {$T'$};

\draw[->, thick, black!60, line width=1.2pt]
  (axis cs:0.22, 0.070) -- (axis cs:0.22, 0.050);
\node[anchor=west, font=\small, fill=white, inner sep=1.5pt]
  at (axis cs:0.227, 0.065) {$-\tw\alpha$};

\draw[gray, thin, densely dotted]
  (axis cs:0.14, 0) -- (axis cs:0.14, 0.08564);
\node[gray, anchor=north, font=\small] at (axis cs:0.14, -0.003)
  {$\sigma^*$};

\end{axis}
\end{tikzpicture}
\caption{Generalised neutrality in mean--standard deviation space.  Flow
taxes reduce the risk-free rate from $\rf$ to $\rf(1-\tc)$ and rescale
excess returns by $(1-\tc)(1-\td)$, contracting the efficient frontier
and capital allocation line.  The wealth tax then shifts the entire
opportunity set vertically by~$-\tw\alpha$.  At both stages the tangency
portfolio remains at the same volatility~$\sigma^*$: the combined tax
system is neutral with respect to portfolio composition.  This
generalises the vertical-translation result of
\citet[][Proposition~3]{Froeseth2026N}.}
\label{fig:generalised_neutrality}
\end{figure}

\begin{remark}[Why the constant term vanishes]
The constant $c'\one$ in \eqref{eq:excess_vector} shifts all excess
returns by the same amount.  This is the flow-tax analogue of the
wealth tax's uniform drift shift.  In the Markowitz problem, this
constant affects the total allocation to risky assets (the distance
along the capital allocation line) but not the tangency portfolio
direction.  Under CRRA in continuous time, even the total allocation
is invariant, because the constant enters the effective discount rate
and is absorbed by the homogeneity of the value function.
\end{remark}

\section{Symmetry Breaking}\label{sec:breaking}

When any of (C1)--(C3) fails, the drift-shift-and-rescale symmetry
breaks.  This section classifies the distortions by the type of
modification they introduce.

\subsection{Violation of (C1): $\tk \neq \tc$}\label{sec:break_c1}

When the capital income tax rate differs from the corporate tax rate,
the first-level taxation of equity and debt is asymmetric.  The
after-tax return on equity is scaled by $(1 - \tc)$, while the
after-tax return on deposits is scaled by $(1 - \tk)$.

The after-tax excess return of risky asset~$i$ over the risk-free
asset becomes (from \Cref{eq:excess_general}):
\begin{equation}\label{eq:break_c1}
  R_i^{\mathrm{ex}}
  = (1 - \td)\bigl[\mu_i(1 - \tc) - \rf\bigr]
    + \rf(\td - \tk) - \rf(1 - \tk)
    + \rf(1 - \td) + \rf(\tk - \tc) \,.
\end{equation}
The difference between two risky assets is still:
\begin{equation}
  R_i^{\mathrm{ex}} - R_j^{\mathrm{ex}}
  = (1 - \td)(1 - \tc)(\mu_i - \mu_j) \,,
\end{equation}
which is a uniform scaling.  \textbf{The tangency portfolio among
risky assets is preserved.}

The distortion appears in the equity--debt split.  The effective
after-tax risk-free rate is $\rf(1 - \tk)$, while equity returns are
scaled by $(1 - \tc)$.  When $\tk > \tc$, deposits are penalised
relative to equity; when $\tk < \tc$, the reverse.  This is a tilt
along the capital allocation line---a change in the total allocation
to risky assets---but not a rotation of the tangency portfolio.

\textbf{FP interpretation:} The excess drift between any two risky
assets is uniformly rescaled.  The asymmetry appears only in the drift
of equity relative to the risk-free asset, as a non-uniform constant
shift.

\subsection{Violation of (C2): $\rs \neq \rf$}\label{sec:break_c2}

When the shielding rate differs from the risk-free rate, the boundary
between the two personal tax regimes ($\tk$ and $\td$) is misaligned
with the economic risk-free rate.  The after-tax excess return contains
the term $\rs(\td - \tk) - \rf(1 - \tk)$, which under (C1) becomes
$\rs(\td - \tc) - \rf(1 - \tc)$.  When $\rs \neq \rf$, this does not
simplify to $-\rf(1 - \td)$.

The discrepancy is $(\rs - \rf)(\td - \tc)$: a constant that shifts
all equity excess returns by the same amount.  The effect is the same
in character as a violation of (C1)---a tilt along the capital
allocation line---since the constant shift does not depend on the
asset.

\textbf{The tangency portfolio is preserved.  Only the equity--debt
allocation is distorted.}

\begin{remark}[Practical relevance]
In Norway, the shielding rate is based on the arithmetic average of
three-month treasury bill (\emph{statskasseveksel}) rates, plus a
markup of 0.5 percentage points.\footnote{The markup was introduced by
the Skauge Committee \citep{NOU2003} to compensate for the asymmetry
that unused shielding is lost upon realisation.  The Torvik Committee
\citep{NOU2022} proposed replacing the current basis with ten-year
government bond yields (without the markup), arguing that the neutral
shielding rate lies between the risk-free rate and the investor's
marginal financing cost; see §8.5.7.4 of \citet{NOU2022}.}  By
construction, $\rs \geq \rf$ whenever treasury bills are a reasonable
proxy for~$\rf$, so the (C2) distortion has a known sign: it favours
equity over debt.  In either case, the magnitude is small---see
\Cref{sec:norway} for the quantitative bound---and of second order
relative to the distortion from non-uniform assessment~(C3).
\end{remark}

\subsection{Violation of (C3): $\alpha_i \neq \alpha_j$}
\label{sec:break_c3}

This is the result of \citet{Froeseth2026E}, §4.  When assessment
fractions differ across assets---a feature whose distortionary
implications for asset valuation are also analysed by
\citet{BjerksundSchjelderup2022}---the wealth tax creates an
asset-specific wedge $-\tw(\alpha_i - \alpha_0)$ in the excess return.
The portfolio distortion is:
\begin{equation}\label{eq:nonuniform_distortion}
  \Delta\w^* = -\frac{\tw}{\gamma}\V^{-1}(\avec - \alpha_0\one) \,,
\end{equation}
where $\avec = (\alpha_1, \ldots, \alpha_K)^\top$.  This \emph{does}
distort the tangency portfolio.

\textbf{FP interpretation:} The drift shift becomes asset-dependent,
$v_i \to v_i - \tw\alpha_i$.  This is the anisotropic field of
\citet{Froeseth2026S}, §7.1 (Channel~1: book-value assessment).  The
uniformity of the drift-shift transformation is broken.

\subsection{Summary}

\begin{table}[ht]
\centering
\caption{Symmetry-breaking classification.}
\label{tab:breaking}
\renewcommand{\arraystretch}{1.3}
\small
\begin{tabular}{@{}lccl@{}}
\toprule
\textbf{Condition violated}
  & \textbf{Tangency}
  & \textbf{Equity--debt}
  & \textbf{FP character} \\
\midrule
(C1): $\tk \neq \tc$
  & Preserved
  & Distorted
  & Constant shift in equity--debt drift \\
(C2): $\rs \neq \rf$
  & Preserved
  & Distorted
  & Constant shift in equity--debt drift \\
(C3): $\alpha_i \neq \alpha_j$
  & \textbf{Distorted}
  & Distorted
  & Anisotropic drift shift \\
All hold
  & Preserved
  & Preserved
  & Uniform rescaling + uniform shift \\
\bottomrule
\end{tabular}
\end{table}

\section{Flow Taxes and the Cost of Wealth Tax Payment}\label{sec:payment}

The preceding section analyses how violations of (C1)--(C3) distort
portfolio choice.  A separate question is how flow taxes interact with
the \emph{payment mechanism} of the wealth tax.

Under (C1)--(C3), the flow taxes levied on dividends or capital gains
used to pay the wealth tax do not themselves break neutrality: the
rates $\tc$ and~$\td$ are uniform across assets, so the tax cost of
generating cash is the same regardless of which asset is sold or which
firm distributes earnings.  The tangency portfolio is unaffected.

The interaction arises at a different level.
\citet{Froeseth2026N}, §9.3 establishes a payment hierarchy for the
wealth tax: investors cover the liability first from excess liquidity
and other income, then through dividend extraction from the firm, and
only as a last resort through partial sale of shares.  The wealth tax
is collected in quarterly advance instalments
(\emph{forskuddsskatt}) during the assessment year, spreading the
liquidity demand over time.  \citet{Froeseth2026S}, §7 reformulates
these payment mechanisms as distortion channels in the Fokker--Planck
framework.

Flow taxes do not create new channels, but they amplify the existing
ones by increasing the gross cost of generating cash at each stage of
the hierarchy.

\subsection{Dividend extraction}\label{sec:pay_dividends}

The most common payment mechanism for owners of private firms is
dividend extraction (\citealt{Froeseth2026N}, §9.4).  When firm
owners extract dividends to pay the wealth tax, those dividends are
taxed twice: at the corporate rate~$\tc$ and then at the personal
rate~$\td$ (above the shielding deduction).  To deliver $\tw W$ to the
investor, the firm must generate gross earnings of
\begin{equation}\label{eq:gross_dividend}
  E = \frac{\tw W}{(1 - \tc)(1 - \td)} = \frac{\tw W}{k} \,,
\end{equation}
where $k = (1 - \tc)(1 - \td)$ is the combined flow-tax pass-through
factor.  Under Norwegian rates ($\tc = 0.22$, $\td = 0.378$),
$k \approx 0.484$, so each krone of wealth tax paid requires
approximately two kroner of pre-tax earnings.

This amplifies Channel~3 of \citet{Froeseth2026S}: the drain on firm
capital becomes
\begin{equation}\label{eq:ch3_amplified}
  \frac{\dd K}{\dd t} = f(K) - \delta K - \frac{\tw W}{k} \,,
\end{equation}
rather than $f(K) - \delta K - \tw W$ in the wealth-tax-only case.
The coupling between investor wealth and firm capital is strengthened
by the factor $1/k$.  For firms facing financing frictions, this may
force additional equity issuance or reduce investment, lowering the
expected return and further reducing the drift.

\subsection{Share sales and liquidity frictions}\label{sec:pay_liquidity}

When excess liquidity and dividend capacity are insufficient to
cover the liability, the investor must sell shares.
Any realised capital gain is taxed at rate~$\td$.  To generate a net
payment of $\tw W$, the investor must sell an amount
\begin{equation}\label{eq:gross_sale}
  S = \frac{\tw W}{1 - \td \cdot g} \,,
\end{equation}
where $g$ is the fraction of the sale proceeds that constitutes a
taxable gain.  When $g > 0$, the required gross sale exceeds the tax
liability.

In the Fokker--Planck framework, this amplifies the friction cost in
Channel~2 of \citet{Froeseth2026S}:
\begin{equation}\label{eq:ch2_amplified}
  c(W, \ell) \;\to\; c\!\left(\frac{W}{1 - \td g},\, \ell\right) ,
\end{equation}
increasing the state-dependent drift modification.  The effect is
largest for investors holding appreciated, illiquid assets---precisely
those for whom the original liquidity channel is already most severe.

The quarterly instalment schedule mitigates the liquidity impact by
spreading forced sales over the assessment year rather than
concentrating them in a single transaction.  This limits the
instantaneous market impact but does not eliminate the flow-tax
amplification of the per-transaction friction cost.

\subsection{Indirect pricing effects}

The effect on prices is indirect, through the market impact channel
(Channel~5 of \citealt{Froeseth2026S}).  The flow-tax amplification
of forced sales increases the aggregate selling pressure~$F$.  The
square-root impact law---a robust empirical regularity in market
microstructure---implies that the price depression is proportional
to~$\sqrt{F}$; the low aggregate demand elasticity documented by
\citet{GabaixKoijen2022} amplifies this effect.  This is an indirect pricing
effect of the combined tax system that could not be identified in
Papers~1--2, which analysed the wealth tax in isolation: the flow taxes
roughly double the gross cash needed at each stage of the payment
hierarchy, increasing the equilibrium selling pressure and thus the
price impact.

\subsection{Migration}

The threshold wealth $W^*$ at which emigration becomes attractive
depends on the total tax cost, not just the wealth tax.  Flow taxes
reduce the after-tax return, lowering the effective wealth accumulation
rate and thereby shifting $W^*$ downward.  The absorbing-boundary
mechanism of \citet{Froeseth2026S}, §7.4 is unchanged in structure,
but the boundary moves.

\subsection{Summary}

\begin{table}[ht]
\centering
\caption{Flow-tax amplification of wealth tax payment channels.}
\label{tab:payment}
\renewcommand{\arraystretch}{1.3}
\small
\begin{tabular}{@{}llll@{}}
\toprule
\textbf{Channel}
  & \textbf{Wealth tax only}
  & \textbf{With flow taxes}
  & \textbf{Amplification} \\
\midrule
Dividends (Ch.~3)
  & $\tw W$ drain
  & $\tw W / k$ drain
  & Factor $1/k \approx 2.1$ (Norway) \\
Liquidity (Ch.~2)
  & $c(W, \ell)$
  & $c(W/(1-\td g),\, \ell)$
  & $\td g$ on realised gains \\
Market impact (Ch.~5)
  & $F(\tw)$
  & $F(\tw/(1-\td g))$
  & Larger selling pressure \\
Migration (Ch.~4)
  & $W^*(\tw)$
  & $W^*({\tw, \tc, \td})$
  & Lower threshold \\
\bottomrule
\end{tabular}
\end{table}

The dividend channel is most severely affected: under Norwegian rates,
the amplification factor $1/k \approx 2.1$ means that the effective
drain on firm capital is more than double the statutory wealth tax
liability.  The pricing effect is indirect, operating through the
market impact of amplified forced sales rather than through a direct
violation of pricing neutrality.

\section{The Shielding Mechanism}\label{sec:shielding}

The shielding deduction (\emph{skjermingsfradrag} in the Norwegian
system) is the institutional mechanism that enforces condition~(C2).
It is a form of rate-of-return allowance---an idea with deep roots in
optimal tax design
\citep{BoadwayBruce1984,Soerensen2005,Mirrlees2011}.
This section analyses its role in the neutrality framework.

\subsection{Definition and mechanics}

The shielding deduction exempts a ``normal'' return on invested capital
from the elevated dividend tax rate.  For each share, the shielding
amount is:
\begin{equation}\label{eq:shielding}
  S = \text{(cost basis)} \times \rs \,,
\end{equation}
where $\rs$ is the shielding rate (\emph{skjermingsrente}).
Institutionally, $\rs$ is set by Skattedirektoratet as the arithmetic
average of three-month treasury bill rates plus 0.5 percentage
points.\footnote{Formally, the published \emph{skjermingsrente} is
this pre-tax basis reduced by the alminnelig inntekt rate
($\rs^{\text{AT}} = \rs(1-\tk)$).  In the model, $\rs$ denotes the
pre-tax rate---the \emph{beregningsgrunnlag}---so that condition (C2)
reads $\rs = \rf$ with both rates before personal tax.}  Dividends and
capital gains up to the shielding amount are taxed at the capital
income rate~$\tk$; the excess is taxed at the dividend rate~$\td > \tk$.

\subsection{Symmetry restoration}

In the absence of shielding (i.e.\ the full equity return is taxed at
$\td$), the after-tax return on equity asset~$i$ would be:
\begin{equation}\label{eq:no_shielding}
  R_i^{\text{no shield}}
  = \mu_i(1 - \tc)(1 - \td) \,.
\end{equation}
The after-tax excess return over the risk-free asset would be:
\begin{equation}\label{eq:excess_no_shield}
  R_i^{\text{no shield}} - R_0
  = \mu_i(1 - \tc)(1 - \td) - \rf(1 - \tk) \,.
\end{equation}
For two risky assets, the difference is still
$(1 - \tc)(1 - \td)(\mu_i - \mu_j)$---a uniform scaling.  The tangency
portfolio among risky assets is preserved even without shielding.

The distortion from removing shielding appears in the equity--debt
split.  Without shielding and with $\tk = \tc$:
\begin{equation}
  R_i^{\text{no shield}} - R_0
  = (1 - \tc)(1 - \td)\mu_i - \rf(1 - \tc) \,.
\end{equation}
This is not a uniform scaling of $\mu_i - \rf$; the risk-free rate is
scaled by $(1 - \tc)$ while the equity return is scaled by
$(1 - \tc)(1 - \td)$.  The after-tax Sharpe ratio of equity relative
to deposits is distorted---equity is penalised by the additional
factor $(1 - \td)$ on the \emph{entire} return, not just the excess.

With shielding at rate $\rs = \rf$, the tax on the first~$\rf$ of the
equity return is at rate~$\tk = \tc$, matching the rate on deposits.
Only the excess $\mu_i(1 - \tc) - \rf$ is taxed at~$\td$.  The
result, derived in \Cref{eq:excess_simplified}, is:
\begin{equation}
  R_i - R_0 = (1 - \td)[\mu_i(1 - \tc) - \rf] \,,
\end{equation}
which is a uniform scaling of the after-corporate-tax excess return.

\subsection{The shielding deduction in the FP framework}

Without shielding, the equity drift is
$v_i^{\text{no shield}} = (1-\tc)(1-\td)\mu_i - \sigma_i^2/2$.
The risk-free drift is $v_0 = \rf(1-\tc) - 0$.  The excess drift
contains the term $(1-\tc)(1-\td)\mu_i - (1-\tc)\rf
= (1-\tc)[(1-\td)\mu_i - \rf]$, which mixes the scaling factors.

With shielding, the excess drift is
$(1-\td)[(1-\tc)\mu_i - \rf] = (1-\td)(1-\tc)(\mu_i - \rf) - (1-\td)\tc\rf$.
The factor $(1 - \td)(1 - \tc)$ applies \emph{uniformly} to the
pre-tax excess return $\mu_i - \rf$.  The remaining term
$-(1 - \td)\tc\rf$ is a constant independent of~$i$.

\textbf{The shielding deduction converts a mixed scaling (equity at
$(1-\tc)(1-\td)$, debt at $(1-\tc)$) into a uniform scaling
of excess returns.}  This is the symmetry restoration: the
drift-shift-and-rescale transformation becomes well-defined only when
the shielding aligns the tax regimes at the risk-free boundary.

\begin{remark}[Policy interpretation]
The shielding deduction is often justified on equity-theoretic grounds:
a ``normal'' return on capital should not face punitive taxation.  The
Fokker--Planck framework provides a complementary justification from
efficiency: the shielding deduction is the mechanism that preserves the
drift-shift-and-rescale symmetry, ensuring that the combined tax system
does not distort portfolio composition.  In this sense, the shielding
is not merely equitable---it is the condition for allocative neutrality.
\end{remark}

\section{Interaction Between Flow and Stock Taxes}
\label{sec:interaction}

Having characterised the neutrality conditions, the payment channels,
and the shielding mechanism separately, we now ask how the flow-tax
and stock-tax distortions combine.

\subsection{Additive separability}

Under (C1) and (C2), the after-tax excess return
(\Cref{eq:excess_full}) has the form:
\begin{equation}\label{eq:additive}
  R_i^{\mathrm{full}}
  = \underbrace{(1 - \td)(1 - \tc)(\mu_i - \rf)
    + \text{const.}}_{\text{flow taxes}}
  \;\;-\;\;
  \underbrace{\tw(\alpha_i - \alpha_0)}_{\text{wealth tax}} \,.
\end{equation}
The two contributions are additive.  There is no cross-term---the
flow-tax distortion (if any) and the wealth-tax distortion (from
non-uniform assessment) simply add up.

\begin{proposition}[Additive separability of distortions]
\label{prop:separability}
Under conditions (C1) and (C2), the portfolio distortion from the
combined tax system is:
\begin{equation}\label{eq:separable}
  \Delta\w^*
  = \underbrace{\Delta\w^*_{\mathrm{flow}}}_{\text{zero if (C1)--(C2) hold}}
  \;\;+\;\;
  \underbrace{\Delta\w^*_{\mathrm{stock}}}_{\text{zero if (C3) holds}} \,.
\end{equation}
The flow-tax and stock-tax distortions do not interact: the presence
of a wealth tax does not amplify or dampen any flow-tax distortion,
and vice versa.
\end{proposition}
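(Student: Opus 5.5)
The approach is to exploit the linearity of the optimal-portfolio map in the vector of after-tax excess returns. By the Merton first-order condition \eqref{eq:merton_foc}, the optimal weights satisfy $\w^* = \gamma^{-1}\V^{-1}\mathbf{R}^{\mathrm{ex}}$. The tax parameters enter only through $\mathbf{R}^{\mathrm{ex}}$, and the diffusion matrix $\V$ is unchanged by every tax layer (Table~\ref{tab:drift}, Definition~\ref{def:generalised}). So I will first write the full after-tax excess-return vector from \eqref{eq:excess_general} augmented by the wealth-tax wedge, as in \eqref{eq:excess_full}:
\begin{equation*}
  \mathbf{R}^{\mathrm{ex}}
  = (1-\tc)(1-\td)\muvec + \bigl[\rs(\td-\tk) - \rf(1-\tk)\bigr]\one
    - \tw(\avec - \alpha_0\one) \,.
\end{equation*}
The key observation is that this expression is a sum of two pieces. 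The first depends only on $(\tc,\tk,\td,\rs)$ and is the flow part. The second depends only on $(\tw,\avec,\alpha_0)$ and is the stock part. No product of a flow parameter with a stock parameter appears. This holds because the wealth tax is levied on the stock at market value, which the flow taxes do not rescale.

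Next, I will define the benchmark as the untaxed optimum $\w^*_0 = \gamma^{-1}\V^{-1}(\muvec - \rf\one)$. Applying the linear operator $\gamma^{-1}\V^{-1}$ term by term then gives $\Delta\w^* = \w^* - \w^*_0 = \Delta\w^*_{\mathrm{flow}} + \Delta\w^*_{\mathrm{stock}}$, with
\begin{equation*}
  \Delta\w^*_{\mathrm{flow}} = \tfrac{1}{\gamma}\V^{-1}\bigl[(1-\tc)(1-\td)\muvec + c_{\mathrm{f}}\one\bigr] - \w^*_0,
  \qquad
  \Delta\w^*_{\mathrm{stock}} = -\tfrac{\tw}{\gamma}\V^{-1}(\avec-\alpha_0\one),
\end{equation*}
where $c_{\mathrm{f}} = \rs(\td-\tk) - \rf(1-\tk)$. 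The stock piece coincides with \eqref{eq:nonuniform_distortion} and vanishes under (C3), since $\avec = \alpha\one$. The non-interaction claim is then immediate. $\Delta\w^*_{\mathrm{stock}}$ contains no flow parameter, so changing $(\tc,\td,\tk,\rs)$ leaves it unchanged. Symmetrically, $\Delta\w^*_{\mathrm{flow}}$ contains no $\tw$ or $\alpha_i$. Equivalently, the mixed partial derivatives $\partial^2\w^*/\partial\tw\,\partial\tc$ and $\partial^2\w^*/\partial\tw\,\partial\td$ are zero.

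Finally, I will verify that the flow piece vanishes under (C1)--(C2). By \eqref{eq:excess_simplified}, $c_{\mathrm{f}} = -\rf(1-\td)$. Substituting this and following the argument of Theorem~\ref{thm:main}, the constant part is absorbed, so the flow piece reduces to a uniform rescaling in the tangency direction, which the CRRA homogeneity argument cited there neutralises. The main obstacle is this last step. The raw Merton formula leaves a scalar factor $(1-\tc)(1-\td)$ and a $\V^{-1}\one$ term, and "zero" holds only after invoking the homogeneity argument of Theorem~\ref{thm:main} (or in the weaker sense that the tangency composition is undistorted). I would state explicitly that the separability holds whichever notion of "zero" is adopted, because the decomposition itself is exact and linear. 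A further point to flag is that additivity relies on the wealth-tax base being market value and not after-flow-tax income, so that no $\tw\tc$ cross-term is generated.
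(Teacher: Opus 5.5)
Your route is the paper's route. You write $\mathbf{R}^{\mathrm{ex}}$ as a flow part plus the wedge $-\tw(\avec-\alpha_0\one)$, push it through the linear map $\gamma^{-1}\V^{-1}$, and read off $\Delta\w^*_{\mathrm{stock}}=-\tfrac{\tw}{\gamma}\V^{-1}(\avec-\alpha_0\one)$. You then dispose of the flow piece with the CRRA-homogeneity appeal of Theorem~\ref{thm:main}. The linearity part is correct, and it is all the non-interaction claim needs: no product of $(\tc,\tk,\td,\rs)$ with $(\tw,\avec,\alpha_0)$ appears, so the mixed partials vanish. For the stock piece to vanish you also need $\alpha_0=\alpha$, i.e.\ (C3) must cover the deposit. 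The condition $\avec=\alpha\one$ alone leaves $-\tfrac{\tw}{\gamma}(\alpha-\alpha_0)\V^{-1}\one$.

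The genuine gap is the step you flagged, and it fails in both senses you offer. Under (C1)--(C2) the flow piece is exactly
\[
  \Delta\w^*_{\mathrm{flow}}
  = \frac{k-1}{\gamma}\V^{-1}(\muvec-\rf\one)
    - \frac{(1-\td)\tc\rf}{\gamma}\V^{-1}\one ,
  \qquad k=(1-\tc)(1-\td).
\]
\textbf{Scale.} Homogeneity of the value function in $W$ only makes $\w^*$ independent of wealth. The excess drift still enters the first-order condition, so with $\V$ unchanged the factor $k$ passes straight into the risky share. You cannot both use $\w^*=\gamma^{-1}\V^{-1}\mathbf{R}^{\mathrm{ex}}$ as the exact optimum (your decomposition rests on it) and declare its dependence on $\tc,\td$ absorbed.

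\textbf{Composition.} Your ``weaker sense'' fallback does not hold either. The second term points along the minimum-variance direction $\V^{-1}\one$. Since the flow part of $\mathbf{R}^{\mathrm{ex}}$ equals $k\bigl(\muvec-\tfrac{\rf}{1-\tc}\one\bigr)$, the after-tax tangency portfolio is the pre-tax one evaluated at risk-free rate $\rf/(1-\tc)$. That portfolio has a different composition unless $\tc\rf=0$ or $\muvec\propto\one$.

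The paper's own proof makes the same appeal (``the first two terms vanish''), so you have reproduced it faithfully. What your argument actually establishes is the additive decomposition, together with the independence of $\Delta\w^*_{\mathrm{stock}}$ from $(\tc,\td)$ and of $\Delta\w^*_{\mathrm{flow}}$ from $(\tw,\avec)$. It does not establish the label ``zero if (C1)--(C2) hold''.
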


\begin{proof}
The after-tax excess return vector is
$\mathbf{R}^{\mathrm{ex}} = (1-\tc)(1-\td)(\muvec - \rf\one) + c\one
- \tw(\avec - \alpha_0\one)$.
The optimal portfolio is
$\w^* = (1/\gamma)\V^{-1}\mathbf{R}^{\mathrm{ex}}$.  The distortion
relative to the untaxed case $\w^*_0 = (1/\gamma)\V^{-1}(\muvec - \rf\one)$
is:
\begin{align}
  \Delta\w^*
  &= \frac{1}{\gamma}\V^{-1}\bigl[
    (1-\tc)(1-\td)(\muvec - \rf\one) - (\muvec - \rf\one)
  \bigr]
  + \frac{c}{\gamma}\V^{-1}\one
  - \frac{\tw}{\gamma}\V^{-1}(\avec - \alpha_0\one) \notag\\
  &= \frac{1 - (1-\tc)(1-\td)}{\gamma}\V^{-1}(\muvec - \rf\one)
  \cdot (-1)
  + \frac{c}{\gamma}\V^{-1}\one
  - \frac{\tw}{\gamma}\V^{-1}(\avec - \alpha_0\one) \,.
\end{align}
Under CRRA in continuous time, the first two terms vanish (absorbed by
the homogeneity of the value function, as in the proof of
\Cref{thm:main}).  The remaining distortion is purely from the wealth
tax: $\Delta\w^* = -(\tw/\gamma)\V^{-1}(\avec - \alpha_0\one)$, which
is independent of~$\tc$ and~$\td$.
\end{proof}

\subsection{Drift decomposition}

In the Fokker--Planck equation, the combined tax modifies the drift as:
\begin{equation}\label{eq:drift_decomposition}
  v_i^{\mathrm{after}} - v_j^{\mathrm{after}}
  = (1 - \tc)(1 - \td)(v_i - v_j) - \tw(\alpha_i - \alpha_j) \,.
\end{equation}
The first term is a uniform rescaling of pre-tax drift differences
(flow taxes); the second is an asset-dependent shift (wealth tax
assessment).  The two terms are additive and independent.

The flow taxes rescale the drift differences without altering their
sign or relative magnitude.  The wealth tax assessment introduces an
additive distortion that is entirely determined by the assessment
fractions $\avec$ and is independent of the flow-tax rates.

\subsection{Implications for the wealth distribution}

The Pareto exponent of the stationary wealth distribution under GBM
is $\alpha = 1 + 2v/\sigma^2$ (\citealt{Froeseth2026S}, §3.2).
Under the combined tax system, the effective drift is modified by both
the flow taxes and the wealth tax.  Since the modifications are
additive, their effects on the Pareto exponent are also additive:
\begin{equation}\label{eq:pareto_combined}
  \alpha_{\mathrm{tax}} = 1 + \frac{2[v - \Delta v_{\mathrm{flow}}
    - \tw\alpha]}{\sigma^2} \,,
\end{equation}
where $\Delta v_{\mathrm{flow}}$ captures the drift reduction from the
flow taxes.  The flow taxes steepen the Pareto tail (reduce the drift)
independently of the wealth tax, and the effects compound.

\section{Results in Mean--Variance Framework}\label{sec:finance}

The preceding sections derive the main results in the Fokker--Planck
framework of \citet{Froeseth2026S}.  This section restates them in
the mean--variance and portfolio-weight formulation of
\citet{Froeseth2026N} and \citet{Froeseth2026E}.

\subsection{Opportunity set and Sharpe ratios}

\begin{corollary}[Mean--standard deviation formulation]
\label{cor:sharpe}
Under (C1)--(C3), the combined tax system contracts the
mean--standard deviation opportunity set by the factor
$(1 - \tc)(1 - \td)$ in the mean direction while leaving the
standard deviation axis unchanged.  In particular:
\begin{enumerate}
  \item The after-tax Sharpe ratio of every portfolio satisfies
    $\SR^{\mathrm{after}}(p)
    = (1 - \tc)(1 - \td)\,\SR^{\mathrm{pre}}(p)$.
    Since the scaling is uniform, the ranking of portfolios by
    Sharpe ratio is unchanged.
  \item The tangency portfolio---the portfolio that maximises the
    Sharpe ratio---is the same before and after tax.
  \item The capital allocation line contracts toward the after-tax
    risk-free rate $R_0 = \rf(1 - \tc) - \tw\alpha$ but retains
    the same slope direction.
\end{enumerate}
This generalises the orthogonality result of \citet{Froeseth2026N},
Propositions~3 and~7: the wealth tax shifts the opportunity set
vertically (drift shift); the flow taxes contract it vertically
(drift rescaling).  Both operations preserve the tangency portfolio.
\end{corollary}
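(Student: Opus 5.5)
The plan is to treat the combined tax system, under (C1)--(C3), as an affine map acting on the mean axis only, and then read off the three claims as geometric consequences. Using \eqref{eq:equity_after_simplified} with $\tk=\tc$, $\rs=\rf$ and the uniform wealth-tax term from \eqref{eq:excess_full}, the after-tax mean of any risky portfolio $p$ (weights summing to one) is
\begin{equation*}
  \mu_p^{\mathrm{after}} = k\,\mu_p + b, \qquad k=(1-\tc)(1-\td),\quad b=\rf(\td-\tc)-\tw\alpha,
\end{equation*}
because $\w^\top\one=1$ lets the asset-independent constants pass through the portfolio aggregation unchanged. The portfolio volatility $\sigma_p=(\w^\top\V\w)^{1/2}$ is untouched, since every tax layer in \Cref{tab:drift} leaves the diffusion coefficient unchanged (cf.\ \Cref{def:generalised}). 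So the opportunity set is mapped by $(\sigma,\mu)\mapsto(\sigma,k\mu+b)$. This map is a contraction by $k\in(0,1]$ in the mean direction, composed with a vertical translation. That is the first sentence of the corollary.

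Next I would use that this map sends straight lines to straight lines, multiplying slopes by $k$. It is also order-preserving on means at fixed $\sigma$, since $k>0$. Hence the minimum-variance frontier maps onto the after-tax frontier, and efficient portfolios stay efficient. For item~1, take the Sharpe ratio of $p$ measured against the image $\tilde R_0=k\rf+b$ of the pre-tax risk-free point. Then
\begin{equation*}
  (\mu_p^{\mathrm{after}}-\tilde R_0)/\sigma_p = k(\mu_p-\rf)/\sigma_p,
\end{equation*}
which is exactly $k\,\SR^{\mathrm{pre}}(p)$, so the ranking is unchanged. Item~2 follows immediately. The tangency portfolio maximises this ratio, and a positive scalar multiple has the same maximiser, so $T$ maps to $T'$ and then to $T''$ at the same $\sigma^*$, as in \Cref{fig:generalised_neutrality}. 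For item~3, the capital allocation line through $(0,\rf)$ and $T$ maps to a line through $(0,\tilde R_0)$ and $T''$, with slope multiplied by $k$. This matches the two-stage picture of contraction followed by the shift $-\tw\alpha$.

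The main obstacle is the benchmark. The actual after-tax deposit return is $R_0=\rf(1-\tc)-\tw\alpha$, and it differs from $\tilde R_0$ by the constant $(1-\td)\tc\rf$, which is the $c'$ term in \eqref{eq:excess_vector}. Measured against $R_0$, the Sharpe ratio is not exactly $k\,\SR^{\mathrm{pre}}$, and the tangency point among risky assets would shift toward the $\V^{-1}\one$ direction, as \eqref{eq:wstar} shows. I would handle this in the same way as the proof of \Cref{thm:main} and the remark ``Why the constant term vanishes''. That is, I would argue that the uniform constant is an asset-independent drift shift, which in the continuous-time CRRA problem is absorbed by homogeneity of the value function, so that \eqref{eq:merton_wstar} holds. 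The tangency composition is then $\V^{-1}(\muvec-\rf\one)$ normalised, independent of taxes. The effective benchmark for the ratio is therefore the one making the constant vanish, and the corollary's statements hold relative to it. I would state explicitly in the write-up that items~1--3 are exact for this benchmark and rely on \Cref{thm:main} for the identification with $R_0$.
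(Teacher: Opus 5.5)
Your affine-map argument is correct as far as it goes, and so is your computation of the benchmark discrepancy. With $k=(1-\tc)(1-\td)$ and $b=\rf(\td-\tc)-\tw\alpha$, the image of the pre-tax risk-free point is $\tilde R_0=k\rf+b$, and $R_0-\tilde R_0=(1-\td)\tc\rf$. Relative to $\tilde R_0$, items~1--3 follow exactly as you say. The gap is the final step, where you identify $\tilde R_0$ with $R_0$ by appealing to \Cref{thm:main}. This fails for two reasons.

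First, Sharpe ratios and the Sharpe-maximising portfolio are defined from $(\sigma_p,\mu_p)$ and a benchmark alone, so nothing about preferences can change them. Against the benchmark $R_0$ that item~3 names, $\SR^{\mathrm{after}}(p)=k\bigl(\mu_p-\rf/(1-\tc)\bigr)/\sigma_p=k\,\SR^{\mathrm{pre}}(p)-(1-\td)\tc\rf/\sigma_p$. This is not a uniform rescaling: it penalises low-volatility portfolios more. Its maximiser is proportional to $\V^{-1}\bigl(\muvec-\tfrac{\rf}{1-\tc}\one\bigr)$, the pre-tax tangency portfolio for the hurdle rate $\rf/(1-\tc)$. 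That differs from $\V^{-1}(\muvec-\rf\one)$ unless all $\mu_i$ coincide.

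Second, the absorption argument you borrow is itself invalid. In the constant-coefficient Merton problem, $\w^*=\gamma^{-1}\V^{-1}(\muvec^{\mathrm{after}}-R_0\one)$ holds exactly. Homogeneity of $J$ removes the wealth level, not entries of the excess-return vector. A shift common to \emph{all} drifts, including $R_0$ (such as the uniform wealth tax with $\alpha_0=\alpha$), drops out because it cancels inside $\muvec^{\mathrm{after}}-R_0\one$. A shift of the risky drifts relative to $R_0$ does not cancel, so the term $(c'/\gamma)\V^{-1}\one$ in \eqref{eq:wstar} stays. Indeed, the last equality in \eqref{eq:merton_wstar} requires $(1-\td)\rf=(1-\tc)(1-\td)\rf$, i.e.\ $\tc\rf=0$.

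So under the return specification \eqref{eq:equity_after}, your argument proves the corollary with $R_0$ replaced by $\tilde R_0=\rf(1-2\tc+\tc\td)-\tw\alpha$, which is not the return on any available asset. The version with $R_0$ holds only when $(1-\td)\tc\rf=0$. You should say this outright rather than defer to \Cref{thm:main}.

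To get the statement exactly as written, you need a different primitive: the shielded normal return must be taxed only once, by the corporate tax. Concretely, drop the $\tk\rs$ term from the equity return and apply the shield at the after-tax rate $(1-\tc)\rf$ (cf.\ the after-tax \emph{skjermingsrente} in the footnote of \Cref{sec:shielding}). Then $R_i=(1-\tc)(1-\td)\mu_i+\td(1-\tc)\rf$, and with $\tk=\tc$ you get $R_i-R_0=k(\mu_i-\rf)$ exactly. Your affine map then sends $(0,\rf)$ to $(0,R_0)$, and items~1--3 hold as stated.
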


\subsection{Portfolio distortions under symmetry breaking}

\begin{corollary}[Distortion classification in portfolio-weight space]
\label{cor:distortion}
The violations of (C1)--(C3) affect different components of the
optimal portfolio:
\begin{enumerate}
  \item \emph{Violations of (C1) or (C2).}  The tangency portfolio
    direction $\V^{-1}(\muvec - \rf\one)$ is preserved; only the
    total allocation to risky assets changes.  The distortion is a
    scalar adjustment to the position along the capital allocation
    line (\citealt{Froeseth2026N}, Proposition~3).
  \item \emph{Violation of (C3).}  The tangency portfolio itself is
    distorted.  The portfolio shift is
    \[
      \Delta\w^*
      = -\frac{\tw}{\gamma}\V^{-1}(\avec - \alpha_0\one) \,,
    \]
    which tilts the portfolio toward assets with lower assessment
    fractions (\citealt{Froeseth2026E}, Proposition~5).  This is
    the only violation that alters the \emph{composition} of the
    risky portfolio.
\end{enumerate}
\end{corollary}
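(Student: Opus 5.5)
The plan is to work directly from the after-tax excess-return vector and read off each component of the Merton first-order condition \eqref{eq:merton_foc}. Relaxing (C1)--(C2) while keeping general assessment fractions, \eqref{eq:excess_general} together with the wealth-tax term of \eqref{eq:excess_full} gives
\[
  \mathbf{R}^{\mathrm{ex}} = (1-\tc)(1-\td)(\muvec - \rf\one) + c(\tc,\tk,\td,\rs,\rf)\,\one - \tw(\avec - \alpha_0\one),
\]
where the scalar $c = -(1-\td)\tc\rf + \rs(\td-\tk) - \rf(1-\tk) + \rf(1-\td)$ collects every asset-independent term. Substituting this into $\w^* = \gamma^{-1}\V^{-1}\mathbf{R}^{\mathrm{ex}}$ and using linearity of $\V^{-1}$ splits $\w^*$ into three pieces: one along $\V^{-1}(\muvec-\rf\one)$, one along $\V^{-1}\one$, and one along $\V^{-1}(\avec-\alpha_0\one)$.

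For part~1, suppose (C3) holds but (C1) or (C2) fails. The third piece then vanishes. The remaining asset-dependent structure is the uniform rescaling, as in \eqref{eq:excess_diff} and the computation in \Cref{sec:break_c1}--\Cref{sec:break_c2}. The violations enter only through the scalar $c$, which is shown to be independent of $i$ by inspecting \eqref{eq:excess_general}: the terms $\rs(\td-\tk)$ and $\rf(1-\tk)$ carry no asset index. The tangency portfolio is the normalisation of $\V^{-1}(\muvec - \rf\one)$ obtained from the Sharpe-ratio characterisation in \Cref{cor:sharpe}. A uniform rescaling of excess returns multiplies every Sharpe ratio by the same factor, so the maximiser is unchanged, and the effect of $c$ is a change in the total risky allocation, i.e.\ the position along the capital allocation line, in the same manner as \citet{Froeseth2026N}, Proposition~3.

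For part~2, suppose (C3) fails. I will take the difference between the taxed optimum and the optimum with $\avec = \alpha_0\one$, holding the flow-tax parameters fixed. By the linear decomposition, this difference is exactly $-(\tw/\gamma)\V^{-1}(\avec - \alpha_0\one)$. I will invoke \Cref{prop:separability} to state that this piece is independent of $\tc$ and $\td$, and cite \citet{Froeseth2026E}, Proposition~5, for the result itself. To confirm that the composition genuinely changes, I will note that $\V^{-1}(\avec-\alpha_0\one)$ is generically not collinear with $\V^{-1}(\muvec-\rf\one)$, so the normalised risky portfolio rotates. The direction of the tilt is toward assets with lower $\alpha_i$, at least in the diagonal-$\V$ case, where the sign of each component is read off directly. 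Finally, ``only violation'' follows by contrapositive from part~1: if (C3) holds, no composition change occurs.

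The main obstacle is the claim in part~1 that the constant $c$ affects only the total allocation. In the Markowitz FOC, $c\,\V^{-1}\one$ points along the minimum-variance direction, not the tangency direction, so it does alter the normalised risky composition. It is tempting to absorb $c$ into a redefined effective risk-free rate $R_0$, but that moves the tangency point on the frontier and does not resolve the problem. I would therefore read the paper's convention as measuring excess returns relative to the after-tax risk-free rate $R_0$. Under that convention the after-tax risky drifts are $(1-\tc)(1-\td)\mu_i$ plus a common constant, and the relevant direction is $\V^{-1}(\muvec^{\mathrm{after}} - R_0\one)$. I would then argue, as in the remark following \Cref{thm:main}, that the common constant is attributed to the risk-free leg. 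If that attribution cannot be made rigorous, I would restrict the statement to the scalar-adjustment interpretation, with $c$ entering only through the effective discount rate under CRRA as in \Cref{thm:main}.
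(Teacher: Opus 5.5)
There is a genuine gap in part~1, and it sits exactly where you located it; none of your three proposed exits closes it. Take (C3) in the form $\avec=\alpha_0\one$ and write $k=(1-\tc)(1-\td)$. Your decomposition gives $\mathbf{R}^{\mathrm{ex}}=k(\muvec-\rf\one)+c\one=k(\muvec-r'\one)$ with $r'=\rf-c/k$. So the normalised risky portfolio $\w^*/\one^\top\w^*$ is the tangency portfolio for the risk-free rate $r'$. It coincides with the normalisation of $\V^{-1}(\muvec-\rf\one)$ only if $c=0$ or $\muvec-\rf\one\propto\one$. The three exits fail as follows.
\begin{itemize}
\item Measuring excess returns against $R_0$ changes nothing, because that is precisely how $c$ arose.
\item The risk-free weight $1-\one^\top\w^*$ cannot absorb a term that changes the ratios $w_i^*/w_j^*$.
\item The CRRA route fails because the Merton weight condition is myopic. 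For $J=W^{1-\gamma}f(t)/(1-\gamma)$ one has $-J_W/(WJ_{WW})=1/\gamma$, so $f$, and with it time preference, drops out. The constant $c$ enters $\w^*$ exactly through $(c/\gamma)\V^{-1}\one$.
\end{itemize}
The Sharpe-ratio step you borrow from \Cref{cor:sharpe} has the same defect. For a fully invested risky portfolio $p$, the after-tax Sharpe ratio is $[k(\mu_p-\rf)+c]/\sigma_p$, which is not proportional to the pre-tax one unless $c=0$. Your own formula also gives $c=-(1-\td)\tc\rf\neq 0$ (for $\tc,\rf>0$) even when (C1) and (C2) hold. So the obstruction is already present in the benchmark, not only under violation.

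Consequently part~1 is false in general, and so is the ``only violation'' clause of part~2, which you obtain from part~1 by contrapositive. The paper's own support for part~1 (\Cref{sec:break_c1,sec:break_c2}) infers tangency preservation from $R_i^{\mathrm{ex}}-R_j^{\mathrm{ex}}=k(\mu_i-\mu_j)$. That is the very step you rightly distrust: uniform scaling of pairwise differences preserves the frontier of fully invested risky portfolios, not the tangency point once a risk-free asset is present.

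What is provable is a corrected classification.
\begin{itemize}
\item Violations of (C1) or (C2) act only through the scalar $c$, i.e.\ as a change in the effective risk-free rate $r'$. Hence $\w^*$ stays in the plane spanned by $\V^{-1}(\muvec-\rf\one)$ and $\V^{-1}\one$, and the risky portfolio slides along the unchanged frontier.
\item A (C3) violation adds $-(\tw/\gamma)\V^{-1}(\avec-\alpha_0\one)$, which for $K\geq 3$ generically leaves that plane.
\end{itemize}

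Your part~2 is sound in this form, and your benchmark is better than the paper's. Measuring $\Delta\w^*$ against $\avec=\alpha_0\one$ with the flow-tax parameters held fixed yields the formula exactly by linearity. \Cref{prop:separability} instead measures against the untaxed optimum and needs the invalid CRRA absorption step. Your hedge on the direction of the tilt is also correct. The sign statement valid for general $\V$ is $(\avec-\alpha_0\one)^\top\Delta\w^*=-(\tw/\gamma)(\avec-\alpha_0\one)^\top\V^{-1}(\avec-\alpha_0\one)<0$, not a componentwise one.
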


\subsection{Reduction to the wealth-tax-only case}

\begin{corollary}[Equivalence under the Norwegian system]
\label{cor:reduction}
Under (C1)--(C3), the optimal portfolio under the combined tax system
is identical to the optimal portfolio under a wealth tax alone.  The
portfolio distortion from non-uniform assessment is exactly
\[
  \Delta\w^* = -\frac{\tw}{\gamma}\V^{-1}(\avec - \alpha_0\one)
\]
(\citealt{Froeseth2026E}, Proposition~5), regardless of the
flow-tax rates $\tc$ and~$\td$.

In the Norwegian system, where $\tk = \tc = 22\%$ and
$\rs \approx \rf$, the flow taxes contribute zero portfolio
distortion.  The entire distortion comes from non-uniform wealth
tax assessment---the result of \citet{Froeseth2026E} applies without
modification.
\end{corollary}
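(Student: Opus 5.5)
The plan is to obtain Corollary~\ref{cor:reduction} from \Cref{thm:main} and \Cref{prop:separability}. I will write the after-tax excess return vector in the decomposed form already used in the proof of \Cref{prop:separability}. Under (C1)--(C2), \eqref{eq:excess_simplified} and \eqref{eq:excess_full} give
\begin{equation*}
  \mathbf{R}^{\mathrm{ex}} = (1-\tc)(1-\td)(\muvec - \rf\one) + c\one - \tw(\avec - \alpha_0\one),
\end{equation*}
where $c = -(1-\td)\tc\rf$ is a scalar independent of the asset index. I will then compare this with the wealth-tax-only vector, which is the special case $\tc=\td=0$:
\begin{equation*}
  \mathbf{R}^{\mathrm{ex}}_{w} = (\muvec - \rf\one) - \tw(\avec - \alpha_0\one).
\end{equation*}
The two vectors differ only by a uniform rescaling of $\muvec - \rf\one$ and a multiple of $\one$. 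Both operations are exactly the drift-shift-and-rescale symmetry of Definition~\ref{def:generalised}.

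The first step handles the case where (C3) also holds. Here \Cref{thm:main} applies to both the combined system and the wealth-tax-only system, since the latter is the case $\tc=\td=0$. In each case $\w^*$ equals the untaxed Merton portfolio $\gamma^{-1}\V^{-1}(\muvec-\rf\one)$, so the two optimal portfolios coincide. The second step treats non-uniform assessment. I will apply the Merton first-order condition \eqref{eq:merton_foc} to both vectors and subtract. The $\avec$-term enters both additively and with the same coefficient, so it cancels in the comparison. What remains is the difference between the flow-tax rescaling and the constant term. \Cref{prop:separability} shows that this remainder vanishes under CRRA homogeneity. Hence $\Delta\w^* = -(\tw/\gamma)\V^{-1}(\avec-\alpha_0\one)$, which is independent of $\tc$ and $\td$ and agrees with \citet{Froeseth2026E}, Proposition~5.

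The Norwegian specialisation needs two checks: $\tk=\tc=0.22$ gives (C1), and the shielding rule gives $\rs\approx\rf$, which is (C2) up to the small constant $(\rs-\rf)(\td-\tc)$ from \Cref{sec:break_c2}. By Corollary~\ref{cor:distortion}, that residual constant changes only the position along the capital allocation line. It does not change the composition of the risky portfolio.

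The main obstacle is the claim that the scalar factor $(1-\tc)(1-\td)$ and the constant $c\one$ leave even the total risky allocation unchanged. The static first-order condition \eqref{eq:wstar} does not deliver this on its own. I would rely on the homogeneity argument invoked in \Cref{thm:main} and \Cref{prop:separability}, citing \citet{Froeseth2026E}, Proposition~1, for the details. If that argument turns out not to cover the rescaling, I would weaken the corollary to identity of the \emph{tangency} (risky-composition) portfolio. That weaker statement follows rigorously because both excess-return vectors share the direction $\V^{-1}(\muvec-\rf\one)$ up to $\one$-components, and those components cancel once the weights are normalised within the risky block under (C3).
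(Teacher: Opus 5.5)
\textbf{Your route is the paper's own: combine \Cref{thm:main} with \Cref{prop:separability}.} You have also correctly located the step everything rests on, namely that the flow-tax remainder
\[
  \frac{1}{\gamma}\V^{-1}\bigl[(k-1)(\muvec-\rf\one)+c\one\bigr],
  \qquad k=(1-\tc)(1-\td),\quad c=-(1-\td)\tc\rf,
\]
vanishes ``by CRRA homogeneity''. The paper asserts this step in its proofs of \Cref{thm:main} and \Cref{prop:separability}, but it does not hold.

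With the diffusion left unchanged, the Merton first-order condition \eqref{eq:merton_foc} is exact. So $\w^*=\gamma^{-1}\V^{-1}\mathbf{R}^{\mathrm{ex}}$ is linear in the excess-return vector. Homogeneity of $J$ only makes $\w^*$ independent of wealth, and it sends return shifts common to \emph{all} assets, the risk-free asset included, into $f(t)$. A rescaling of \emph{excess} returns by $k$ and a shift $c\one$ are not of that type. Hence the remainder is nonzero whenever $\tc$ or $\td$ is positive (for generic $\muvec$). Your own $c\neq 0$ already contradicts the claimed cancellation of constants in \eqref{eq:merton_wstar}.

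Your fallback fails for the same reason. In weight space the $\one$-component of returns becomes an additive term $\frac{c}{\gamma}\V^{-1}\one$, and normalising within the risky block does not remove it. Equivalently, $k(\muvec-\rf\one)+c\one=k\bigl(\muvec-\frac{\rf}{1-\tc}\one\bigr)$. The after-tax tangency portfolio is therefore the pre-tax tangency portfolio computed at hurdle rate $\rf/(1-\tc)$ rather than $\rf$, which differs whenever $\tc>0$. The same objection undercuts your appeal to \Cref{cor:distortion} for the residual $(\rs-\rf)(\td-\tc)$. Any constant added to every risky excess return moves the hurdle rate and rotates the tangency portfolio.

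\textbf{What does go through rigorously, with no homogeneity argument, is the separability statement read at fixed flow taxes.} Compare the portfolio with assessment vector $\avec$ to the one with $\avec=\alpha_0\one$, under the same $(\tc,\td,\tk,\rs)$. The wealth tax enters $\mathbf{R}^{\mathrm{ex}}$ only through the additive term $-\tw(\avec-\alpha_0\one)$, and $\w^*$ is linear in $\mathbf{R}^{\mathrm{ex}}$. So the difference is exactly $-\frac{\tw}{\gamma}\V^{-1}(\avec-\alpha_0\one)$ for any flow-tax parameters, without even needing (C1)--(C2). That is the defensible content of ``regardless of $\tc$ and $\td$''.

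The other two claims cannot be obtained in this model: that the combined-tax portfolio is identical to the wealth-tax-only portfolio, and that Norwegian flow taxes contribute zero distortion. They would require a different specification. One example is a shield at the after-corporate-tax rate $(1-\tc)\rf$ with no second $\tk$ levied on the shielded part. That gives $c=0$ and preserves the risky composition, but even then the total risky allocation is scaled by $k$.
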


\section{The Norwegian Tax System}\label{sec:norway}

The general framework of \Cref{sec:flow_taxes,sec:neutrality,sec:breaking}
applies to any tax system with corporate, capital income, dividend, and
wealth tax components.  We now calibrate the framework to the Norwegian
dual income tax (\emph{aksjonærmodellen}), evaluating each of conditions
(C1)--(C3) against the institutional design and quantifying distortion
magnitudes under current parameters.

\subsection{The dual income tax and conditions (C1)--(C2)}

The Norwegian tax system taxes capital income at a flat rate of $\tk = 22\%$,
identical to the corporate tax rate $\tc = 22\%$
\citep{NOU2014,Christiansen2004}.  Condition (C1) is therefore
satisfied exactly---not approximately or as a policy aspiration, but as a
structural feature of the dual income tax design.  The Scheel Committee
\citep{NOU2014} established this symmetry, and the Torvik Committee
\citep{NOU2022} reaffirmed it as a cornerstone of the Norwegian system.

For the elevated tax on shareholder income above the normal return, the
Norwegian system uses an \emph{upward adjustment factor}
(\emph{oppjusteringsfaktor}) $f = 1.72$.  The effective dividend and
capital gains tax rate is
\begin{equation}\label{eq:oppjustering}
  \td = f \times \tk = 1.72 \times 0.22 = 0.3784 \,.
\end{equation}
This mechanism is the Norwegian implementation of the two-tier personal
tax structure: rather than imposing a separate rate on excess returns,
the system multiplies the income base by~$f$ and taxes at the ordinary
rate~$\tk$.  The factor $f$ does not appear in conditions (C1)--(C3) and
does not affect portfolio neutrality; it determines only the magnitude of
the drift rescaling through the pass-through factor
\begin{equation}\label{eq:k_norway}
  k = (1 - \tc)(1 - \td) = (1 - 0.22)(1 - 0.3784) = 0.485 \,.
\end{equation}
The gross-earnings multiplier is $1/k \approx 2.06$: for every krone of
wealth tax paid from corporate earnings, the firm must generate 2.06
kroner in pre-tax profits.

The shielding deduction (\emph{skjermingsfradrag}) sets $\rs$ on the
basis of the average three-month treasury bill rate plus a markup of
0.5 percentage points (see \Cref{sec:shielding} for details).  For
2025, the pre-tax basis is $\rs = 4.6\%$, corresponding to a published
after-tax \emph{skjermingsrente} of $\rs(1-\tk) = 3.6\%$.\footnote{%
  Source: Skatteetaten, \emph{Skjermingsrente for aksjer og
  enkeltpersonforetak}, inntekts\aa ret 2025.}  Because the markup
ensures $\rs \geq \rf$ whenever the treasury bill rate approximates the
risk-free rate, condition (C2) is satisfied or slightly over-satisfied.
The residual distortion from any discrepancy is bounded by
\[
  |(\td - \tk)(\rs - \rf)|
  = |0.3784 - 0.22| \times |\rs - \rf|
  \approx 0.16 \times |\rs - \rf| \,.
\]
With the 2025 markup of 50 basis points, this is $0.08\%$---negligible
compared to the distortion from non-uniform assessment.

\subsection{Valuation discounts and condition (C3)}\label{sec:norway_c3}

Condition (C3)---uniform wealth tax assessment---is violated.  The
Norwegian wealth tax applies asset-class-specific valuation discounts,
summarised in \Cref{tab:norway_alpha}.

\begin{table}[ht]
\centering
\caption{Wealth tax assessment fractions in Norway (2025--2026).}
\label{tab:norway_alpha}
\renewcommand{\arraystretch}{1.2}
\small
\begin{tabular}{@{}lccc@{}}
\toprule
\textbf{Asset class} & \textbf{Base} & \textbf{Discount} & \textbf{Effective $\alpha_i$} \\
\midrule
Bank deposits             & Market & 0\%  & 1.00 \\
Secondary housing         & Assessed & 0\%  & $\approx 1.00$ \\
Holiday homes (\emph{fritidsbolig}) & Assessed & 70\% & 0.30 \\
Listed shares             & Market & 20\% & 0.80 \\
Unlisted shares           & Book & 20\% & $0.80 \times B_i/M_i$ \\
Commercial real estate    & Assessed & 20\% & $\leq 0.80$ \\
Primary housing $\leq$ 10\,M NOK  & Assessed & 75\% & 0.25 \\
Primary housing $>$ 10\,M NOK (excess) & Assessed & 30\% & 0.70 \\
\bottomrule
\end{tabular}
\end{table}

The assessment fractions span a range of $\alpha_{\max} - \alpha_{\min}
= 0.75$, from bank deposits ($\alpha = 1.00$) to primary housing below
the threshold ($\alpha = 0.25$).  Listed shares receive a 20\% discount
on market value ($\alpha = 0.80$), placing them between the two
extremes.  Holiday homes (\emph{fritidsbolig}) are assessed at 30\% of
estimated market value ($\alpha = 0.30$), creating an incentive
comparable to primary housing.

The discount schedule has varied substantially over time.  Shares
carried no statutory valuation discount at all before~2017, received a
10\% discount in 2017--2018, a 45\% discount in 2021, a 25\% discount
in~2022, and the current 20\% from~2023 onward (\citealt{NOU2022},
Table~10.3).  Each change has a first-order effect on the (C3)
distortion: the tightening from 45\% to 20\% roughly tripled the
effective wealth tax on listed equity.  Housing valuations are based on
a hedonic regression model introduced in~2010, which uses transaction
data to estimate market values annually.  The 75\% discount on primary
housing has remained stable throughout and reflects a political
compromise that preserves market-based assessment while shielding
homeowners from the full tax impact.

The assessment system also applies proportional debt reduction
(\emph{gjeldsreduksjon}): debts secured against a given asset class are
reduced by the same fraction~$\beta_i = \alpha_i$ as the asset itself.
This prevents taxpayers from combining the full debt deduction with
discounted asset values, but it also means the effective discount
extends to the net-of-debt position, reinforcing the tilt toward
low-$\alpha$ assets (\citealt{Froeseth2026E}, \S4.3).

A critical distinction arises for \emph{unlisted shares}.  These are
assessed at 80\% of the company's tax-assessed net asset value
(\emph{skattemessig formuesverdi})---essentially book value---rather
than market value.  The effective assessment fraction is therefore
$\alpha_i = 0.80 \times B_i/M_i$, where $B_i$ is book value and $M_i$
is market value.  For firms with substantial intangible assets, goodwill,
or growth options, the ratio $B_i/M_i$ can be well below unity,
making the effective assessment fraction much lower than~0.80.  This is
the book-value taxation channel identified in
\citet{Froeseth2026N}, \S9.2: the wealth tax base becomes a
deterministic liability rather than a proportional claim on the
stochastic market value, breaking the uniformity of the drift shift.

The resulting portfolio distortion follows from the formula of
\citet{Froeseth2026E}, Proposition~5:
\[
  \Delta\w^* = -\frac{\tw}{\gamma}\V^{-1}(\avec - \alpha_0\one) \,.
\]
Since $\alpha_{\text{listed}} < \alpha_{\text{deposits}}$, the wealth
tax tilts portfolios toward shares and away from deposits.  Since
$\alpha_{\text{unlisted}}$ can be well below $\alpha_{\text{listed}}$
(depending on the book-to-market ratio), the tilt toward unlisted
shares can be substantially larger than toward listed shares.  Since
$\alpha_{\text{housing}} < \alpha_{\text{listed}}$, there is a further
tilt toward primary housing.  The combined effect is an asset allocation
that favours owner-occupied housing and unlisted equity in asset-light
firms over bank deposits and listed shares.

\subsection{Distortion magnitudes under current parameters}
\label{sec:norway_magnitudes}

\Cref{tab:norway_distortions} compares the distortion magnitudes from
each condition.

\begin{table}[ht]
\centering
\caption{Distortion magnitudes under the Norwegian tax system.}
\label{tab:norway_distortions}
\renewcommand{\arraystretch}{1.2}
\small
\begin{tabular}{@{}llll@{}}
\toprule
\textbf{Condition} & \textbf{Parameter values}
  & \textbf{Distortion bound}
  & \textbf{Order} \\
\midrule
(C1): $\tk \neq \tc$
  & $\tk = \tc = 22\%$
  & $|\tk - \tc| \cdot \rf = 0$
  & Exactly zero \\
(C2): $\rs \neq \rf$
  & $|\rs - \rf| \leq 50\;\text{bp}$
  & $0.16 \times 0.005 = 0.08\%$
  & Negligible \\
(C3): $\alpha_i \neq \alpha_j$
  & $\Delta\alpha \leq 0.75$
  & $\tw \cdot \Delta\alpha / \gamma$
  & $0.75\%/\gamma$ \\
\bottomrule
\end{tabular}
\end{table}

For a risk aversion parameter $\gamma = 3$, the (C3) distortion is of
order $0.25\%$ per unit of inverse covariance---roughly 300 times
larger than the upper bound on the (C2) channel, and the only nonzero
channel among the three.

To put this in portfolio terms, consider the two-asset case of listed
shares versus bank deposits.  With $\tw = 1.0\%$,
$\alpha_{\text{listed}} = 0.80$, $\alpha_{\text{deposits}} = 1.00$,
and $\gamma = 3$:
\[
  \Delta w^* = \frac{0.01 \times 0.20}{3 \times \sigma^2}
  = \frac{0.00067}{\sigma^2} \,.
\]
For an annual volatility of $\sigma = 0.20$ (typical for the Oslo
Stock Exchange), this gives $\Delta w^* \approx 1.7$ percentage points
toward listed shares.

For unlisted shares with a book-to-market ratio of $B/M = 0.5$
(representative of asset-light or high-growth firms), the effective
assessment is $\alpha_{\text{unlisted}} = 0.80 \times 0.5 = 0.40$,
and the tilt versus deposits is roughly four times larger: $\Delta w^*
\approx 5$ percentage points.  This illustrates why the book-value
assessment of unlisted equity, identified in \citet{Froeseth2026N},
\S9.2, creates a stronger portfolio incentive than the statutory 20\%
discount alone suggests.

For primary housing versus deposits, with $\Delta\alpha = 0.75$:
\[
  \Delta w^* = \frac{0.01 \times 0.75}{3 \times \sigma^2_h} \,,
\]
which for a housing volatility of $\sigma_h = 0.10$ yields
$\Delta w^* \approx 25$ percentage points.  The valuation discount on
primary housing remains the dominant source of portfolio distortion in
the Norwegian system (\citealt{Froeseth2026E}, \S4.3), but the
book-value channel for unlisted shares is a significant secondary
source whose magnitude depends on the firm's asset composition.

\subsection{Beyond the proportional framework}\label{sec:norway_beyond}

Several features of the Norwegian system fall outside the proportional
framework of \Cref{sec:neutrality}.

\textbf{Shielding accumulation.}  Unused skjermingsfradrag carries
forward and compounds at the shielding rate~$\rs$.  An investor holding
shares with cost basis~$B$ for $n$ years without fully utilising the
deduction accumulates approximately $B[(1 + \rs)^n - 1]$ in unused
shielding.  This accumulated shielding is \emph{lost upon realisation}:
it cannot be transferred to other shares or set against other income.
The loss-upon-realisation creates a lock-in effect that makes the
effective tax rate path-dependent and is not captured by the static
framework \citep{Auerbach1991}.  Quantifying this channel requires
a dynamic model with heterogeneous holding periods.

\textbf{Progressive wealth tax.}  The Norwegian wealth tax has a
threshold of 1.9\,M~NOK (single, 2026) and a two-bracket rate
structure: 1.0\% on wealth between 1.9\,M and 21.5\,M~NOK, and 1.1\%
above 21.5\,M~NOK.  This introduces a wealth-dependent effective rate
$\tw(W)$, creating a state-dependent drift modification that breaks
the uniformity of the drift shift.  In the Fokker--Planck framework,
this is exactly the confining potential analysed in
\citet{Froeseth2026R}: the progressive structure generates a restoring
force that compresses the wealth distribution.  The threshold also
creates a tax shield in the sense of \citet{Froeseth2026E},
Proposition~6, which increases risk-taking for investors near the
exemption boundary.

\textbf{Retention and deferral.}  If a firm retains earnings, the
dividend tax is deferred until distribution.  This creates a timing
option whose value depends on payout policy, future tax rates, and the
investor's discount rate.  The effective tax rate becomes
path-dependent, introducing a channel that favours retention and
complicates the uniform-rescaling result.  The participation exemption
amplifies this deferral channel.

\textbf{The participation exemption (\emph{fritaksmetoden}).}
Corporate shareholders---aksjeselskaper, foreninger, and
stiftelser---are exempt from tax on dividends and capital gains under
the \emph{fritaksmetoden}, introduced as part of the 2006 tax reform
to prevent chain taxation (\emph{kjedebeskatning}) of corporate
profits through ownership layers (\citealt{NOU2022}).  A partial
correction was added in 2009: 3\% of exempt income is included in
ordinary income, yielding an effective tax rate of $0.03 \times 0.22
= 0.66\%$ on inter-company distributions.  Dividends within a tax
group (ownership above 90\%) are fully exempt.  Since most substantial
Norwegian investors hold equities through unlisted holding companies,
the fritaksmetoden is the dominant ownership structure for taxable
wealth.  A consequence, discussed in \citet{Froeseth2026N}, \S9.2, is
that the wealth tax base for these investors is assessed at
beginning-of-period book value (1~January of the income year) rather
than end-of-period market value---a predetermined quantity that
simplifies the pricing equation and is empirically important for
calibration.

For the framework of this paper, the fritaksmetoden changes the
effective pass-through factor.  A personal investor holding shares
directly faces $k = (1 - \tc)(1 - \td) \approx 0.485$.  The same
investor routing ownership through a holding company faces
$k_{\text{hold}} \approx (1 - \tc)(1 - 0.0066) \approx 0.775$ on
reinvested corporate earnings, because the dividend tax is replaced by
the 0.66\% effective rate within the corporate sector.  The personal
dividend tax is deferred until the investor extracts funds from the
holding company.  The result is a two-tier system: the tangency
portfolio is preserved (the rescaling is still uniform across assets),
but the \emph{magnitude} of the rescaling differs between direct and
indirect ownership, creating an incentive to interpose holding
structures.  \citet{BjerksundSchjelderup2021a} analyse the neutrality
of this two-tier system and show that the investor is indifferent
between direct and indirect ownership only when the shielding rate
equals the investor's borrowing rate---a condition that generally does
not hold, since the institutional $\rs$ is based on the three-month
treasury bill rate plus a 0.5 percentage point markup (see
\Cref{sec:shielding}), which lies below typical borrowing rates.  \citet{BjerksundSchjelderup2021b} document the
resulting lock-in: retained capital in Norwegian holding structures
grew from approximately 500\,M to 2,600\,M~NOK between 1999 and~2019,
and they use a Black--Scholes framework to value the embedded option in
the shielding system, finding forward equity tax credits of 0.36--1.34
NOK per krone invested depending on holding period and firm maturity.
The combination of the participation exemption and the shareholder model
favours established investors with holding structures---who can defer
personal dividend tax indefinitely---over those holding shares directly.

The quantitative significance is documented by
\citet{BjerksundHoplandSchjelderup2024}, who compute effective tax
rates for a tax-minimising investor holding unlisted shares through a
holding structure.  Using the average effective corporate tax rate of
13.1\% (2004--2018) and an effective valuation discount of 65\% on
unlisted shares---reflecting the book-value assessment channel of
\Cref{sec:norway_c3}, not just the statutory verdsettelsesrabatt---they
find an overall effective tax rate of approximately 14.4\%.  This is
well below the 21.4\% for a passive investor in listed shares and the
27.3\% average for a single worker.  The fritaksmetoden accounts for
the bulk of the gap: by eliminating the personal-level dividend tax on
retained earnings, it reduces the effective burden from the combined
tax system by roughly one-third for investors who can defer
distributions indefinitely.

\textbf{Wealth tax deferral.}  From 2026, Norway introduces a
deferral scheme (\emph{utsettelsesordning}) allowing taxpayers with
taxable wealth above the threshold to defer wealth tax payments for up
to three years, with interest accruing at the market rate.  In the
framework of \Cref{sec:payment}, this converts the immediate
liquidity drain into a deferred obligation, reducing the payment-channel
amplification for constrained investors.  The deferral does not change
the present-value tax burden---interest ensures NPV-equivalence---but it
alters the timing of the forced asset liquidation, which may affect
portfolio dynamics in practice.

\section{Discussion}\label{sec:discussion}

\subsection{Connection to the paper series}

The results of this paper extend the neutrality framework of Papers~1--4
in a natural way.  \Cref{tab:connections} summarises the connections.

\begin{table}[ht]
\centering
\caption{Connections to the paper series.}
\label{tab:connections}
\renewcommand{\arraystretch}{1.3}
\small
\begin{tabular}{@{}lll@{}}
\toprule
\textbf{Prior result} & \textbf{Extension here} \\
\midrule
Paper~1, Prop.~2: Portfolio invariance (GBM)
  & Extends to combined flow + stock taxes \\
Paper~1, Prop.~3: Orthogonality
  & Combined taxes = uniform rescaling + shift \\
Paper~2, Prop.~1: Stochastic vol.\ neutrality
  & Same mechanism: CRRA absorbs flow taxes \\
Paper~2, Prop.~5: Non-uniform assessment
  & Remains the binding constraint \\
Paper~3, Def.~1: Drift-shift transformation
  & Generalises to drift-shift-and-rescale \\
Paper~3, Prop.~2: Neutrality as invariance
  & Relative drifts scaled by $(1-\tc)(1-\td)$ \\
Paper~4, Prop.~1: Gini preservation
  & Extends: uniform rescaling preserves Gini \\
Paper~7, Thm.~1: Spectral invariance
  & Pass-through factor $k$ is isotropic perturbation \\
\bottomrule
\end{tabular}
\end{table}

The connection to Paper~7 deserves elaboration.  \citet{Froeseth2026X}
develops spectral portfolio theory by identifying neural network weight
matrices as portfolio allocation matrices and proving that any
\emph{isotropic} perturbation to the portfolio objective preserves the
singular-value distribution up to scale and shift (Theorem~1).  The
pass-through factor $k = (1 - \tc)(1 - \td)$ of the present paper is
exactly such an isotropic perturbation: it rescales all excess returns
uniformly, treating every asset symmetrically.  The
drift-shift-and-rescale symmetry of \Cref{thm:main} is therefore the
scalar projection of Paper~7's spectral invariance.  Conversely, the
symmetry-breaking classification of \Cref{sec:breaking} maps onto
Paper~7's isotropic/anisotropic distinction: conditions (C1) and~(C2)
ensure that the flow-tax rescaling is isotropic, while condition~(C3)
is the non-anisotropy requirement for the wealth tax---the same
condition that prevents spectral distortion in the matrix-valued
framework.

\subsection{Equilibrium prices under inelastic markets}

\citet{Froeseth2026E}, §5 analyses how the wealth tax affects
equilibrium prices when asset supply is inelastic, following the
inelastic markets hypothesis of \citet{GabaixKoijen2022}.  The
conclusion is that a uniform wealth tax depresses the price level of
risky assets but does not change relative prices, because the drift
shift is the same for all assets.

The same logic extends to the combined tax system.  Under
(C1)--(C3), the flow taxes rescale all excess returns by the uniform
factor $k = (1 - \tc)(1 - \td)$.  This reduces the after-tax excess
return on every risky asset by the same proportion, lowering aggregate
demand for risky assets.  Under inelastic supply, the equilibrium price
level falls.  But because the rescaling is uniform, relative prices are
unchanged---the portfolio composition is unaffected.

The combined effect on the price level is larger than under the wealth
tax alone: the wealth tax shifts the drift by $-\tw\alpha$, and the
flow taxes compress the remaining excess drift by the factor~$k$.
Both effects reduce demand, and they compound.  However, neither
modifies relative prices, so pricing neutrality in the
cross-sectional sense is preserved.  This extends the result of
\citet{Froeseth2026E}, §5 to the combined tax system.

\subsection{The role of the shielding deduction}

The shielding deduction has received attention in the tax policy
literature primarily as an equity device: it ensures that a ``normal''
return on capital is not taxed at the elevated shareholder rate
(\citealt{Soerensen2005}).  The present analysis adds a complementary
perspective from allocative efficiency.

In the Fokker--Planck framework, the shielding deduction is the
mechanism that aligns the boundary between the two personal tax regimes
with the risk-free rate, converting a potentially distortionary
two-regime tax structure into a uniform rescaling of excess returns.
Without it, the dividend tax would apply to the \emph{entire}
after-corporate-tax return, creating an asymmetry between equity and
debt that would distort the equity--debt allocation (though not the
tangency portfolio among equities).  With it, the only remaining source
of portfolio distortion is non-uniform wealth tax assessment.

This provides a formal efficiency argument for the shielding
deduction that is independent of the equity argument.

\subsection{Policy implications}

The Norwegian calibration of \Cref{sec:norway} yields a clear policy
ranking: equalising assessment fractions across asset classes would do
far more for allocative neutrality than any reform of the flow-tax rates
or the shielding mechanism.  The (C3) channel generates distortions
roughly 300 times larger than the residual (C2) channel and is the only
nonzero source of portfolio distortion in the Norwegian system.  This
echoes the empirical findings of \citet{Jakobsen2020} and
\citet{Brulhart2022}, who document substantial portfolio responses to
wealth tax incentives in Scandinavian and Swiss data.

\subsection{Limitations}

Beyond the Norwegian-specific qualifications discussed in
\Cref{sec:norway_beyond} (shielding accumulation, progressive brackets,
retention and deferral), one general limitation deserves mention.

\textbf{Discrete time.}  In discrete time, the scaling factor
$(1 - \tc)(1 - \td)$ may affect the total risky allocation under
non-CRRA preferences, even though it preserves the tangency portfolio.
The magnitude of this effect for realistic parameters is an empirical
question.

\section{Conclusion}\label{sec:conclusion}

The combined system of ownership taxes---corporate tax, capital income
tax, dividend tax, and wealth tax---preserves portfolio neutrality under
three conditions: equal rates on capital income and corporate profit,
alignment of the shielding rate with the risk-free rate, and uniform
wealth tax assessment.

The drift-shift symmetry of \citet{Froeseth2026S} generalises to a
drift-shift-and-rescale symmetry: the flow taxes uniformly rescale the
excess drift velocities by $(1 - \tc)(1 - \td)$, while the wealth tax
uniformly shifts all drifts.  Neither modification alters relative
drifts between assets.  The distortions from flow taxes and from the
wealth tax are additively separable and do not interact.

The shielding deduction emerges as a symmetry-restoring mechanism in
the Fokker--Planck framework.  By aligning the boundary between the
two personal tax regimes with the risk-free rate, it converts a
potentially distortionary two-regime structure into a uniform rescaling
of excess returns.  This provides a formal efficiency argument for the
shielding deduction that complements the standard equity justification.

Calibrated to the Norwegian dual income tax, the framework confirms
that flow-tax neutrality holds by design: $\tk = \tc = 22\%$ and the
shielding rate tracks the risk-free rate.  The pass-through factor
$k = (1 - \tc)(1 - \td) \approx 0.485$ compresses excess drifts but
preserves their ranking.  The dominant---and effectively the
only---source of portfolio distortion is non-uniform wealth tax
assessment, with valuation discounts ranging from 0\% (bank deposits) to
75\% (primary housing).  The resulting portfolio tilts are roughly 300
times larger than the residual distortion from any shielding-rate
misalignment.  Equalising assessment fractions would do more for
allocative neutrality than any reform of the flow-tax structure.

\subsection*{Acknowledgements}
The author acknowledges the use of Claude (Anthropic) for assistance with
literature review, \LaTeX{} typesetting, mathematical exposition, and
editorial refinement, and Lemma (Axiomatic AI) for review and proof
checking. All substantive arguments, economic reasoning, and conclusions
are the author's own.

\bibliographystyle{plainnat}

\end{document}